\newtheorem{theorem}{Theorem}[section]
\newaliascnt{lemma}{theorem}
\newaliascnt{assumption}{theorem}
\newaliascnt{proposition}{theorem}
\newaliascnt{corollary}{theorem}
\newaliascnt{claim}{theorem}
\newaliascnt{observation}{theorem}
\newaliascnt{definition}{theorem}
\newaliascnt{fact}{theorem}
\newaliascnt{statement}{theorem}
\newaliascnt{mechanism}{theorem}
\newtheorem{lemma}[lemma]{Lemma}
\newtheorem{definition}[definition]{Definition}
\newcommand{\argmax}{\operatornamewithlimits{argmax}}
\newcommand\R{\mathbb{R}}
\newcommand\Rev{\textsc{Rev}}
\newcommand\boldeta{\boldsymbol{\eta}}
\newcommand\boldlambda{\boldsymbol{\lambda}}
\newcommand\boldmu{\boldsymbol{\mu}}
\DeclareMathOperator*{\sign}{\mathrm sgn}
\DeclareMathOperator*{\E}{\mathbb E}
\newcommand{\Infer}{\Longrightarrow}
\newcommand{\ud}{\mathrm{d}}
\newcommand{\cu}{U}
\newcommand{\V}{\mathcal{V}}
\newcommand{\x}{\boldsymbol{x}}
\renewcommand{\b}{\boldsymbol{b}}
\newcommand{\Zero}{\boldsymbol{0}}
\renewcommand{\v}{{\boldsymbol{v}}}
\newcommand{\p}{\boldsymbol{p}}
\renewcommand{\paragraph}{%
  \@startsection{paragraph}{4}%
  {\z@}{1.75ex \@plus 1ex \@minus .2ex}{-1em}%
  {\normalfont\normalsize\bfseries}%
}
\newcommand{\ubar}{\underline}
\newcommand{\short}[2]{#1}
\begin{document}

\title{Optimal Dynamic Auctions are Virtual Welfare Maximizers\thanks{The
authors thank the anonymous reviewers for their helpful comments.}}
\author{Vahab Mirrokni\textsuperscript{$\dag$}
\and Renato Paes Leme\textsuperscript{$\dag$}
\and Pingzhong Tang\textsuperscript{$\ddag$}
\and Song Zuo\textsuperscript{$\dag$}\\
\textsuperscript{$\dag$}Google Research,
\texttt{\{mirrokni,renatoppl,szuo\}@google.com}\\
\textsuperscript{$\ddag$}Tsinghua University, \texttt{kenshinping@gmail.com}}
\date{}
\maketitle
\begin{abstract}
%

We are interested in the setting where a seller sells sequentially arriving
items, one per period, via a {\em dynamic auction}. At the beginning of each
period, each buyer draws a private valuation for the item to be sold in that
period and this valuation is independent across buyers and periods. The auction
can be dynamic in the sense that the auction at period $t$ can be conditional on
the bids in that period and all previous periods, subject to certain
appropriately defined {\em incentive compatible} and {\em individually rational}
conditions. Perhaps not surprisingly, the revenue optimal dynamic auctions are
computationally hard to find and existing literatures that aim to approximate
the optimal auctions are all based on solving complex dynamic programs. It
remains largely open on the structural interpretability of the optimal dynamic
auctions.

In this paper, we show that any optimal dynamic auction is a virtual welfare
maximizer subject to some monotone allocation constraints. In particular, the
explicit definition of the virtual value function above arises naturally from
the primal-dual analysis by relaxing the monotone constraints. We further
develop an ironing technique that gets rid of the monotone allocation
constraints. Quite different from Myerson's ironing approach, our technique is
more technically involved due to the interdependence of the virtual value
functions across buyers. We nevertheless show that ironing can be done
approximately and efficiently, which in turn leads to a Fully Polynomial Time
Approximation Scheme of the optimal dynamic auction.

\end{abstract}
\clearpage

\section{Introduction}\label{sec:intro}

Recently, the problem of designing multi-period dynamic mechanisms has been shown
to be both theoretically challenging and practically important. In particular,
when running a sequence of repeated auctions on online advertising platforms,
using dynamic auctions optimized across different time periods could potentially
bring significant gains both in terms of revenue and social welfare. The power
of dynamic mechanisms has been investigated by a number of recent papers
\citep{bergemann2002information,parkes2004mdp,cavallo2008efficiency,athey2013efficient,kakade2013optimal,pai2013optimal,pavan2014dynamic,devanur2014perfect,balseiro2016dynamic,chawla2016simple,BergemannCastroWeintraub2017EC,BLMPZ17,BalseiroMirRokniPaesLeme2017EC,lobel2017dynamic,shen2018expost,balseiro2019dynamic}.
We would also refer to \citet{bergemann2011dynamic} and \citet{bergemann2017dynamic} for
comprehensive surveys on the subject.

In particular, we consider a setting where a seller repeatedly interacts with a
set of buyers and sells one item per period. The value of each buyer for the
item at each period are independently drawn from commonly known prior
distributions (no need to be identical) at the beginning of that period. The
seller is allowed to sell the item of each period not only depending on the bids
submitted in the current period, but also the histories, i.e., all the bids
submitted in past periods. In the meanwhile, the seller must guarantee the
dynamic auctions to be {\em ex-post individual rational} --- the cumulative
utility for each buyer is always positive, and {\em dynamic incentive compatible}
--- bidding truthfully (i.e., submitting private values as bids) to the auction
is optimal for each buyer by taking into consideration the effect of current
bids on future outcomes.

Even though dynamic mechanisms can be much more effective in maximizing revenue
and social welfare \citep{jackson2007overcoming,papadimitriou2016complexity},
they have not been widely adopted in practice. The main issue of implementing
dynamic mechanisms in practice is their high complexity. The complexity induced
by the exponentially growing design space makes it difficult to solve or even to
describe such mechanisms.

A series of recent work has made progress to resolve the complexity issue
described above. For example, \citet{ashlagi2016sequential} and
\citet{MirrokniLTZ17} show that it is enough for the optimal
dynamic auction to depend on scalar summaries of histories instead of the
full history.
However, the approximately optimal mechanisms described in these papers are
solutions of complex dynamic programs that are written into a large table. It is
therefore very difficult to understand the structure of these mechanisms. It
remains open whether there is an intuitive structural characterization of the
optimal dynamic auction.

In this paper, we show that the exact optimal dynamic auction has a very simple
structural interpretation: the optimal dynamic auction in each period is a
second price auction on a certain appropriately defined virtual value
space.\footnote{A virtual value function is a map from buyer value space to real
numbers. A virtual value is the corresponding real number of some private value
of the buyer.} More specifically, such virtual values (before
ironing\footnote{Informally, the ironing of a virtual value function is an
operation that maps the virtual value function to another virtual value function
(we call an ironed virtual value function) while preserving the expectation. For
the formal definition, see \autoref{def:ironing}.}\footnote{Some recent works on
virtual value and ironing \citep{elkind2007designing,roughgarden2016ironing}.})
are quite similar to Myerson's virtual value \citep{myerson1981optimal}, i.e.,
they have the form of linear combinations of private values and Myerson's
virtual values (before ironing). However, just like Myerson's auction, to make
the virtual welfare\footnote{Virtual welfare is the sum of virtual values of the
buyers who get the item. For expected virtual welfare, the expectation is taken
over the randomness from the allocation rules.} maximizing allocation rule
monotone, one need to first iron the virtual values. Unlike the ironing in
Myerson's auction, the ironing step in our case is interdependent across the
values of different buyers. In other words, one's virtual value after ironing
not only depends on his/her own value, but also on other buyers' values.
Although the ironing step here is not as simple as the ironing in Myerson's
auction, its computation is still efficient for constant many buyer cases.
Moreover, we provide a Fully Polynomial Time Approximation Scheme to compute the
virtual values for any period of the dynamic auction given the histories so far,
and such virtual values induce a dynamic auction with revenue arbitrarily close
to the optimal.

\paragraph{Techniques}
There are two main techniques used in our analysis for optimal dynamic auctions.
The first is the so-called {\em bank account mechanisms} which are a subset of
dynamic auctions with simple structures and can achieve the optimal revenue
of all the dynamic auctions
\citep{mirrokni2016dynamic,mirrokni2016optimal,mirrokni2018dynamic,MirrokniLTZ17}.
Briefly speaking, a bank account mechanism keeps a state for each buyer as the
summary of the history of each buyer, which is a scalar called {\em balance}.
Each period depends on the previous periods through the vector of buyer balances.
With the bank account framework, the designer only needs to specify
single-period auctions that are single-period incentive compatible together with
a valid balance update policy. In other words, the design of an entire dynamic
auction breaks into the design of a series of single-period auctions and a
balance update policy. The decomposition greatly simplifies the problem and
enables clean mathematical programs for each period.

The second is a primal-dual analysis and a sensitivity analysis of the
parametric mathematical programs. In fact, primal-dual analysis is commonly used
in economic studies [\citealp[Chapter~5]{nisan2007algorithmic},
\citealp{daskalakis2013mechanism,cai2016duality,cai2017simple,daskalakis2017strong}].
In particular, for our problem, we can prove that these programs are convex and
satisfy the Slater's conditions. Hence the solution is optimal if and only if
the KKT conditions are satisfied. From these conditions, we show that the
auction in each period maximizes some virtual welfare and we also discover the
exact form of the virtual values with partial relaxation on the monotonicity
constraints of allocation. Furthermore, we show that adding these relaxed
constraints back to the program indeed corresponds to ironing the virtual values.
As we mentioned previously, the ironing step here is interdependent across the
values of the buyers and hence different from the ironing step in Myerson's
auction. To resolve this difficulty, we show a method to algorithmically
accomplish the ironing step.

\section{Preliminaries}\label{sec:prelim}

  We study a setting where a seller repeatedly interacts with $k$ buyers
  selling one item per period over $T$ periods. The value of each buyer
  $i \in [k]$ for the item in period $t \in [T]$ is $v^t_i \in \V$. If $x^t_i
  \in [0, 1]$ represents the probability that buyer $i$ is allocated the item
  in period $t$, his utility is $v^t_i \cdot x^t_i$.
  Throughout this paper, (i) we use subscripts as the indices of buyers, bold
  fonts for vectors of all the buyers (i.e., $\mathbf{a} = (a_1, \ldots, a_k)$),
  and subscript $-i$ for the vector except the $i$-th element (i.e., $a_{-i} =
  (a_1, \ldots, a_{i-1}, a_{i+1}, \ldots, a_k)$); (ii) we use superscripts as
  the indices of periods and $a^{1..t}$ to denote the sequence $a^1, \ldots,
  a^t$.

  The values $v_i^t$ are assumed to be drawn from independent distributions
  $F_i^t$.\footnote{In practice, it is still fair to assume that the values of
  the buyers (advertisers) are independent conditional on each particular inventory
  and cookie.}\footnote{If the distributions are not independent, then the weak
  version of truthfulness still holds, i.e., truthful if all other buyers bid
  truthfully.} The distributions $F_i^t$ are assumed to be common knowledge but
  the realizations of the random variables are initially unknown for both the
  buyers and the designer. At each period, the following events happen:
  \begin{enumerate}
    \itemsep0em
    \item each buyer $i$ learns his value $v_i^t \sim F_i^t$;
    \item each buyer $i$ reports value $\hat v_i^t$ to the designer;
    \item the designer implements an outcome $\x^t \in [0, 1]^k$ and charges the
          buyers $\p^t$;
    \item each buyer accrues utility $u_i^t = v_i^t \cdot x_i^t - p_i^t$.
  \end{enumerate}
  A dynamic mechanism can then be described in terms of a pair of maps for each
  period, which associate the history of reports $\hat \v^{1..t} = (\hat \v^1,
  \hat \v^2, \hdots, \hat \v^t)$ to an outcome $\x^t$ and payment $\p^t$:
  \begin{align*}
    \text{Outcome:}~\x^t: \V^{kt} \rightarrow [0, 1]^k \qquad
    \text{Payment:}~\p^t: \V^{kt} \rightarrow \R^k.
  \end{align*}
  Therefore we can define: $u_i^t(v_i^t; \hat{\v}^{1..t}) =
  v_i^t \cdot x_i^t(\hat{\v}^{1..t}) - p_i^t(\hat{\v}^{1..t})$.

  \subsection{Dynamic Incentive Compatibility}

    A mechanism is incentive compatible if it provides incentives for buyers to
    reveal their true types in each iteration. Such conditions for dynamic
    mechanisms can be easily defined by backward induction: in the last period,
    regardless of the history so far and other buyers' reports, it should be
    incentive compatible for each buyer to report his true value. This corresponds
    to the usual notion of incentive compatibility in (static) mechanism design:
    \begin{align*}
      \textstyle
      v_i^T = \argmax_{\hat v^T_i} u^T_i(v^T_i; \hat \v^{1..T})
    \end{align*}
    for all $i, \hat \v^{1..T-1}, \hat v_{-i}^T, v^T_i$. To simplify notations, from
    now on we will omit the `for-all' quantification and assume all expressions are
    quantified as `for-all' in its free variables.
    For the next-to-last-period, it should be incentive compatible for the buyer
    to report his true value given that he will report his true value in the
    following period:
    \begin{align*}
      \textstyle
        v_i^{T-1} =
        \argmax_{\hat v_i^{T-1}} u_i^{T-1}(v_i^{T-1}; \hat \v^{1..T-1})
        + \E_{v^T_i} [u^T_i(v^T_i; \hat \v^{1..T-1}, v^T_i, \hat v_{-i}^T)].
    \end{align*}
    Proceeding by backward induction\short{}{ for all periods}, we require that:
    \begin{align}\label{eq:dic}\tag{DIC}
      \textstyle
      v^t_i = \argmax_{\hat v^t_i} u^t_i(v^t_i; \hat \v^{1..t}) +
        \cu^t_i(\hat v_i^{1..t} | \hat v_{-i}^{1..T})
    \end{align}
    where the second term is the \emph{continuation utility}, i.e., the expected
    utility obtained from the subsequent periods of the mechanism assuming the buyer
    reports truthfully:
    \begin{align*}
      \textstyle
      \cu^t_i(\hat v_i^{1..t} | \hat v_{-i}^{1..T}) :=
        \E_{v_i^{t+1..T}}\left[\sum_{\tau=t+1}^T
          u_i^\tau(v_i^\tau; \hat \v^{1..t}, v_i^{t+1..\tau}, \hat v_{-i}^{t+1..\tau})\right]
    \end{align*}
    A well-known fact in dynamic mechanism design is that \ref{eq:dic}
    implies that buyer $i$'s expected overall utility $U_i^0(\hat v_{-i}^{1..T})$
    is maximized by reporting truthfully in each period.

  \subsection{Ex-Post Individual Rationality}

    Another desirable constraint is ex-post individual rationality which says that
    a buyer should derive non-negative utility from the mechanism for every
    realization of the values:
    \begin{align}\label{eq:epir}\tag{eP-IR}
      \textstyle
      \sum_{t=1}^T u^t_i(v^t_i; \v^{1..t}) \geq 0
    \end{align}
    We focus on the problem of maximizing revenue subject to \ref{eq:dic},
    \ref{eq:epir}, and feasibility constraints:
    \begin{align*}
      \max &\textstyle \quad \Rev = \E[\sum_{t=1}^T \sum_{i=1}^k p^t_i(\v^{1..t})]  \\
      \text{s.t.} & \textstyle \quad \text{\eqref{eq:dic}, \eqref{eq:epir},
        and feasibility:}~\sum_{i = 1}^k x^t_i(\v^{1..t}) \leq 1
    \end{align*}

  \subsection{Bank Account Mechanisms}\label{subsec:bank}

    The space of mechanisms satisfying \ref{eq:dic} and \ref{eq:epir} is very
    broad and unstructured. We restrict our attention in this section to a
    subclass of dynamic mechanisms introduced by \citet{mirrokni2016dynamic}
    called \emph{bank account mechanisms}. The mechanisms are simple, dynamic
    incentive compatible by design and have the following notable features:
    \begin{lemma}[\cite{mirrokni2016nonclairvoyant}]\label{lem:bankopt}
      Given any dynamic mechanism satisfying \ref{eq:dic} and \ref{eq:epir},
      there exists a bank account mechanism with at least the same revenue and
      at least the same welfare.

      In particular, for any given setting, there is a revenue-optimal mechanism
      in the form of a bank account mechanism.
    \end{lemma}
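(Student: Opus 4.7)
The plan is to convert an arbitrary dynamic mechanism $M$ satisfying \ref{eq:dic} and \ref{eq:epir} into a bank account mechanism $M'$ whose allocation matches $M$'s and whose expected revenue and welfare are at least those of $M$. The reduction hinges on identifying what aspect of the history $\hat\v^{1..t-1}$ a bank account mechanism is allowed to depend on. I would define, for each buyer $i$ and period $t$, the balance $\bal_i^t$ to equal the expected continuation utility that $M$ promises buyer $i$ conditional on the realized history through period $t-1$. This is the natural scalar summary of the history from the buyer's incentive perspective, since under truthful play it is exactly what appears as the continuation term $\cu_i^t$ in the \ref{eq:dic} constraint.

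Given this choice of state, I would construct $M'$ period by period. The period-$t$ allocation is taken directly from $M$, which ensures welfare equality. The payment is decomposed into two parts: a single-period portion that, together with the period-$t$ allocation $\x^t$, forms a static incentive-compatible auction on the current report profile, and a deposit (or withdrawal) that reflects how $M$'s promised continuation utility to buyer $i$ changes as a result of the current bids. The balance update rule is then forced: $\bal_i^{t+1}$ equals $\bal_i^t$ adjusted by the deposit minus the realized single-period utility, so that $\bal_i^t$ continues to track what $M$ owes the buyer in expected future utility. Total payments collected by $M'$ equal those of $M$ in expectation, giving revenue equality, and \ref{eq:epir} of $M$ translates into non-negativity of the balances, ensuring the bank account is valid.

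The main obstacle is the payment-decomposition step. Concretely, I need to show that the envelope characterization of \ref{eq:dic} forces each period-$t$ payment of $M$ to split into (i) a term that, together with $\x^t$, defines a valid single-shot IC static auction on the current period's reports, and (ii) a term equal to the change in expected continuation utility across the report transition. Once this decomposition is in place, single-period IC combined with a valid balance update implies \ref{eq:dic} of $M'$ by the sufficiency results of the bank account framework of \citet{mirrokni2016dynamic}. The second claim of the lemma follows immediately: any revenue-optimal dynamic mechanism can be transformed into a bank account mechanism with weakly greater revenue, hence the optimum is attained within the bank account class.
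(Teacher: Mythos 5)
Note first that the paper does not prove this lemma itself; it imports it from \citet{mirrokni2016nonclairvoyant}, so there is no in-text proof to compare against. Evaluating your proposal on its own merits, I see a genuine gap centred on the balance definition and the two bank-account constraints \eqref{eq:bi} and \eqref{eq:bu}, which you do not verify.

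First, your proposed state variable is $\bal_i^t = \E\bigl[\sum_{\tau=t}^T u_i^\tau \mid \v^{1..t-1}\bigr]$, the expected continuation utility promised by $M$. But bank account mechanisms are anchored at $\b^0 = \Zero$, whereas $\bal_i^1 = U_i^0$ equals $M$'s total expected utility for buyer $i$, which is generically strictly positive under eP-IR. So either your state is not a valid balance or you implicitly assume $U_i^0 = 0$. Flipping the sign (tracking $U_i^0 - \bal_i^t$) fixes the anchor but does not repair the update: the \ref{eq:bu} cap $b_i^{t+1} \leq b_i^t + u_i^t$ then reduces to an inequality that holds only in expectation over $v_i^t$, not pointwise, and pointwise is what \ref{eq:bu} requires. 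Neither does non-negativity of your state follow from eP-IR: eP-IR of $M$ bounds $\sum_{\tau=1}^T u_i^\tau$ from below, but says nothing about the sign of an interior continuation promise or of the gap $U_i^0 - \bal_i^t$.

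Second, you never touch \ref{eq:bi}. In a general DIC, eP-IR mechanism $M$, the single-period expected utility $\E_{v_i^t}[u_i^t \mid v_{-i}^t, \v^{1..t-1}]$ is a full-history object; for $M'$ to be a bank account mechanism this expectation must collapse to a quantity $\xi_i^t(v_{-i}^t)$ independent of the balance. Making that happen is exactly where the payment decomposition must do real work, and it interacts with both revenue accounting (the static part of the payment plus the expected deposit must match $M$'s payments) and with \ref{eq:bu}. Your write-up asserts the decomposition exists without exhibiting it or checking that the resulting $\xi_i^t$ is non-negative and history-independent. Without that step, the ``single-period IC plus valid balance update implies DIC'' sufficiency result (\autoref{lem:bank_dic}) does not apply, because the object you have constructed is not yet a bank account mechanism.

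In short: the reduction idea is reasonable in spirit, but defining the balance as a continuation promise is not by itself a valid bank-account state, and the argument leaves \ref{eq:bi} and the pointwise \ref{eq:bu} inequalities unestablished. Any correct proof must specify a balance that starts at zero, is non-negative with probability one, increases by no more than the realized per-period utility, and yields a balance-independent expected single-period utility --- none of which is automatic for your candidate state.
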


    Bank account mechanisms keep a state for each buyer, which is a scalar
    called \emph{balance}. Each period depends on the previous periods through
    the vector of buyer balances. Another main  feature is that in this
    framework, the designer needs to specify single-period auctions that are
    single-period incentive compatible together with a valid balance update
    policy. That is, once a valid balance update policy is in place, all the
    designer needs to worry about are single-period incentive compatibility
    constraints.

    A bank account mechanism $B$ is defined in terms of the following functions
    for each period:
    \begin{itemize}
      \itemsep0em
      \item
        A static single-period auction $\x^{B,t}(\v^t, \b)$, $\p^{B,t}(\v^t, \b)$
        parameterized by a balance vector $\b \in \R^k_+$ that is (single-period)
        \emph{incentive-compatible} for each $\b$, i.e.:
        \begin{align}\label{eq:ic}\tag{IC}
          v^t_i \cdot x^{B,t}_i(\v^t, \b) - p^{B,t}_i(\v^t, \b)
            \geq v^t_i \cdot x^{B,t}_i(\hat v^t_i, v_{-i}^t, \b)
                     - p^{B,t}_i(\hat v^t_i, v_{-i}^t, \b)
        \end{align}
      \item
        Note that we do not require the mechanism to be (single-period)
        individually rational. We also require the utility of the buyer to be
        \emph{balance independent in expectation}, i.e., that:
        \begin{align}\label{eq:bi}\tag{BI}
          \E_{v^t_i \sim F^t_i}\left[v^t_i \cdot x^{B,t}_i(\v^t, \b)
                                      - p^{B,t}_i(\v^t, \b)\right]
          \text{ is a non-negative constant not depending on } \b
        \end{align}
      \item
        A balance update policy $\b^{B,t}(\v^t, \b)$ which maps the previous balances and
        the reports to the current balances, satisfying the following
        \emph{balance update} conditions:
        \begin{align}\label{eq:bu}\tag{BU}
            0 \leq
            b_i^{B,t}(\v^t, \b) \leq b^i + v^t_i \cdot x^{B,t}_i(\v^t, \b) -
              p^{B,t}_i(\v^t, \b)
        \end{align}
        Given the balance update functions, we can define $\b^t : \V^t \rightarrow
        \R^k_+$ recursively as:
        \begin{align*}
            \b^0 = \Zero \quad \text{ and } \quad
            \b^1(\v^1) = \b^{B,1}(\v^1, \Zero) \quad \text{and} \quad \b^t(\v^{1..t}) =
            \b^{B,t}(\v^t, \b^{B,t-1}(\v^{1..t-1}))
        \end{align*}
        which allows us to define a dynamic mechanism in the standard sense as:
        \begin{align*}
          \x^t(\v^{1..t}) = \x^{B,t}(\v^t, \b^{t-1}),~~
          \p^t(\v^{1..t}) = \p^{B,t}(\v^t, \b^{t-1}).
        \end{align*}
    \end{itemize}

    In what follows we will abuse notations by dropping the superscript $B$ and
    refer to $\x^t(\v^{1..t})$ and $\x^t(\v^t, \b^{t-1})$ interchangeably. One
    important theorem from previous studies is that any bank account mechanism
    satisfies stronger notions of \ref{eq:dic} and \ref{eq:epir}.
    \begin{lemma}[\cite{mirrokni2016nonclairvoyant}]\label{lem:bank_dic}
      Any bank account mechanism satisfying \ref{eq:ic}, \ref{eq:bi}, and
      \ref{eq:bu} is \short{\ref{eq:dic} and \ref{eq:epir}}{dynamic incentive
      compatible (\ref{eq:dic}) and ex-post individually rational
      (\ref{eq:epir})}.
    \end{lemma}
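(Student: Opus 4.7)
My plan is to prove the two conclusions separately, handling ex-post individual rationality first (which is the easier of the two) and then dynamic incentive compatibility via backward induction, leaning on (BI) as the crucial structural condition.

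For \ref{eq:epir}, I would use (BU) directly. The inequality $b_i^t \leq b_i^{t-1} + v_i^t x_i^t - p_i^t$ can be rewritten as $u_i^t \geq b_i^t - b_i^{t-1}$, so summing (telescoping) from $t=1$ to $T$ and using the initial balance $b_i^0 = 0$ gives $\sum_{t=1}^T u_i^t \geq b_i^T \geq 0$. This handles \ref{eq:epir} in a few lines.

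For \ref{eq:dic}, the plan is backward induction on $t$. The base case $t=T$ is immediate: the continuation utility is zero, so \ref{eq:dic} reduces to the single-period \ref{eq:ic}. For the inductive step, the key observation I intend to establish is that, under the inductive hypothesis (the buyer reports truthfully in all subsequent periods), the continuation utility $\cu_i^t(\hat v_i^{1..t}\mid \hat v_{-i}^{1..T})$ does not depend on $\hat v_i^t$. Given this, the inner maximization in \ref{eq:dic} reduces to maximizing the current-period utility $u_i^t$, which by \ref{eq:ic} is achieved at $\hat v_i^t = v_i^t$. To prove the independence claim, I would define $W_i^\tau(\b)$ to be the expected utility from periods $\tau,\ldots,T$ conditional on balance vector $\b$ at the end of period $\tau-1$ and on $\hat v_{-i}^{1..T}$, and argue by backward induction on $\tau$ that $W_i^\tau(\b)$ is constant in $\b$: the recursion $W_i^\tau(\b) = \E_{v_i^\tau}[u_i^\tau(v_i^\tau;\cdot)] + \E_{v_i^\tau}[W_i^{\tau+1}(\b^{B,\tau}(\v^\tau,\b))]$ has its first term constant in $\b$ by \ref{eq:bi}, and its second term constant in $\b$ by the inductive hypothesis. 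Applying this at $\tau=t+1$ shows $\cu_i^t$ is independent of $\b^t$, and since $\hat v_i^t$ only enters the continuation through $\b^t$, we get the desired independence.

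The main obstacle is making sure the independence argument really treats the entire vector $\b$ (not just $b_i$), because $\hat v_i^t$ can perturb $b_{-i}^t$ as well as $b_i^t$, and future allocations and payments to buyer $i$ in principle depend on the whole balance vector. The backward induction on $\tau$ above does cover this, since at each step (BI) asserts that the expected utility of buyer $i$ in a single period is a constant not depending on the full balance vector $\b$; that is the reason (BI) is phrased globally rather than just as balance-independence in $b_i$. I expect the remainder (telescoping for \ref{eq:epir}, plugging the induction together for \ref{eq:dic}) to be routine.
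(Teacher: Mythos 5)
The paper does not prove \autoref{lem:bank_dic} at all: it is attributed to \cite{mirrokni2016nonclairvoyant} and imported as a black box, so there is no internal proof to compare against. Your reconstruction is correct and is, as far as I can tell, the standard argument for this fact. The telescoping via \eqref{eq:bu} with $b_i^0 = 0$ and $b_i^T \geq 0$ is exactly right for \eqref{eq:epir}. For \eqref{eq:dic}, the key mechanism is identified correctly: in a bank account mechanism, all dependence of the future on the past is channeled through the end-of-period balance vector $\b^t$, and the continuation value $W_i^{t+1}(\b)$ is constant in the \emph{entire} balance vector by a separate backward induction over $\tau$ driven by \eqref{eq:bi}; hence the term $\cu_i^t(\hat v_i^{1..t} \mid \hat v_{-i}^{1..T})$ does not vary with $\hat v_i^t$, and the optimization collapses to the single-period \eqref{eq:ic}. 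You are also right that the substance of the argument is the observation that \eqref{eq:bi} must be read as balance-independence in the full vector $\b$, not merely in $b_i$, since $\hat v_i^t$ can move $b_j^t$ for $j \ne i$. Two small points worth tightening in a writeup: (i) the ``inductive hypothesis'' that buyer $i$ reports truthfully from period $t+1$ on is not an induction hypothesis per se but is hard-wired into the definition of $\cu_i^t$, so the only genuine backward induction occurring is the one on $\tau$ inside the $W_i^\tau$ argument; (ii) for the base case and each step you should note explicitly that $\b^{t-1}$ is determined by $\hat \v^{1..t-1}$ only and therefore is fixed when optimizing over $\hat v_i^t$, so that invoking \eqref{eq:ic} at the fixed balance $\b^{t-1}$ is legitimate.
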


\section{The Structure of Optimal Bank Account Mechanisms}\label{sec:program}

  In this section, we show our first main result that identifies the underlying
  structure of the optimal bank account mechanism (hence also the optimal
  dynamic auctions). Interestingly, the dynamic auctions with the structure can
  be interpreted as an ironed virtual welfare maximizing auction, where the
  ironed virtual value is defined as follows.

  \begin{definition}[Ironing]\label{def:ironing}
    The ironing operation on the virtual value functions of all buyers transfers
    these virtual value functions into some other virtual value functions
    (called ironed virtual value functions) while preserving the following
    properties:
    \begin{itemize}
      \itemsep0em
      \item {\em Monotonicity}: The allocation rule that maximizes the ironed
            virtual welfare is monotone, i.e., for each buyer $i$, the
            allocation probability is weakly increasing in $v_i$ for any fixed
            $v_{-i}$.
      \item {\em Limited transfer}: The expectation of virtual value conditional
            on each allocation equivalence class is unchanged. An allocation
            equivalence class of buyer $i$ with any given $v_{-i}$ is a maximal
            subset of his/her private values where the allocation probability to
            buyer $i$ is a constant. Again, the allocation rule here is the one
            that maximizes the ironed virtual welfare.
      \item {\em First order dominance}: The ironed virtual value is first order
            dominated by the virtual value before ironing.
    \end{itemize}
  \end{definition}

  Informally, we have the following theorem:
  \begin{theorem}[Informal]\label{thm:informal}
    Any revenue optimal bank account mechanism is maximizing some virtual value
    (after ironing) for each period.
  \end{theorem}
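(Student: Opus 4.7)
The plan is to invoke \autoref{lem:bankopt} to restrict attention to bank account mechanisms, decompose the revenue optimization via backward induction into per-period subproblems, and solve each subproblem by Lagrangian duality whose KKT conditions will exhibit the ironed virtual welfare structure. Concretely, let $V^{t+1}(\b)$ denote the optimal expected revenue from periods $t+1,\ldots,T$ when entering period $t+1$ with balance vector $\b$. The period-$t$ subproblem is to choose a single-period auction $(\x^t(\cdot,\b), \p^t(\cdot,\b))$ and a balance update $\b^{B,t}(\cdot,\b)$ maximizing
\[
\E_{\v^t}\!\left[\textstyle\sum_i p_i^t(\v^t,\b)\right] + \E_{\v^t}\!\left[V^{t+1}(\b^{B,t}(\v^t,\b))\right]
\]
subject to \ref{eq:ic}, \ref{eq:bi}, \ref{eq:bu}, and feasibility $\sum_i x_i^t \le 1$. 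A backward induction (using concavity of $V^{t+1}$ in $\b$, which is preserved by expectation and by maximization against affine constraints) shows that once $\p^t$ is eliminated via Myerson's payment identity, the resulting program is convex in $(\x^t,\b^{B,t})$; Slater's condition is verified by the trivial no-sale mechanism with $\b^{B,t}\equiv\Zero$, so strong duality and KKT become available.

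I would next carry out the primal-dual analysis with the monotonicity part of \ref{eq:ic} first \emph{relaxed}. Attaching multipliers $\boldlambda$ to \ref{eq:bi} and $\boldeta$ to \ref{eq:bu} and writing out pointwise stationarity in $x_i^t(\v^t,\b)$ yields a welfare-maximization rule: $\x^t(\v^t,\b)$ maximizes $\sum_i \tilde\phi_i^t(v_i^t;\b)\, x_i$ over the feasible simplex, where $\tilde\phi_i^t$ arises from differentiating the Myerson payment integral and turns out to be a linear combination of $v_i^t$ and Myerson's virtual value $v_i^t - (1-F_i^t(v_i^t))/f_i^t(v_i^t)$ with coefficients determined by the multipliers. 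Stationarity in $b_i^{B,t}$ together with complementary slackness then produces a recursion on $(\boldlambda,\boldeta)$ across periods that pins the multipliers down and closes the system.

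The final and most delicate step is to reinstate monotonicity and interpret the correction as ironing. The pre-ironing $\tilde\phi_i^t$ need not be monotone in $v_i^t$, so I would attach a multiplier $\nu_i^t(v_i^t,\v_{-i}^t)$ to the monotonicity constraint and re-derive stationarity. The corrected virtual value $\phi_i^t$ differs from $\tilde\phi_i^t$ by a term supported on values where the constraint binds; crucially, because feasibility $\sum_i x_i^t \le 1$ couples the buyers' argmaxes, a correction to $\tilde\phi_i^t$ at $v_i^t$ depends on $\v_{-i}^t$ through the identity of the current winner. I would show that complementary slackness on $\nu_i^t$ forces it to be supported exactly on the allocation equivalence classes of \autoref{def:ironing} and to integrate to zero on each, which is the limited-transfer property; monotonicity of the induced allocation and first-order dominance then follow from envelope and integration-by-parts arguments analogous to Myerson's. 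The main obstacle is proving that this \emph{interdependent} ironing exists and is self-consistent across all $k$ buyers simultaneously, since perturbing $\phi_i^t$ redraws the allocation equivalence classes of $\phi_j^t$ for $j\ne i$. I would attack this by a fixed-point argument on the joint ironing map, concavifying the cumulative virtual-revenue curve slice by slice along each allocation class in the product value space, and verifying continuity so that Kakutani applies.
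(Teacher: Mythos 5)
Your outline agrees with the paper's architecture at the highest level — reduce to bank account mechanisms via \autoref{lem:bankopt}, decompose by backward induction into per-period convex subprograms, invoke KKT, and read off the virtual values — but it departs at two places, one of which is a genuine gap and the other an unnecessary and likely unworkable detour.

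First, you propose attaching a multiplier $\boldlambda$ directly to \ref{eq:bi}, but \ref{eq:bi} is a requirement that the period-$t$ expected utility be a constant \emph{uniformly across balance vectors} $\b$; it is a consistency condition linking different instances of the subproblem, not a pointwise constraint inside one, and it has no natural home as a finite family of inequalities you can dualize. The paper handles this by parameterizing rather than constraining: it fixes the period-$t$ expected utility as data $\xi_i(v_{-i})$, at which point \ref{eq:bi} holds by construction, solves each subprogram for fixed $\xi$, and then optimizes over $\xi$ separately via sensitivity analysis using $\partial R / \partial \xi_i^t(v_{-i}^t) = (\E_\b[\beta_i^t(v_{-i}^t)] - 1) f^t(v_{-i}^t)$. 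Your sketch has no analogue of this last step and the multiplier you call $\boldlambda$ is misplaced: the $\lambda_i(v_{-i})$ that actually appears in $\beta_i(v_{-i})$ sits on the constraint $\E_{v_i}[\vartheta_i(\v) x_i(\v)] \le b_i + \xi_i(v_{-i})$ coming from the nonemptiness of valid \ref{eq:bu} updates, not on \ref{eq:bi}.

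Second, the Kakutani fixed-point argument for existence of the interdependent ironing is heavier machinery than the problem calls for, and as sketched it is not on solid ground. You correctly identify that the ironing couples the buyers through the allocation rule, but you do not need to construct a fixed point: the unrelaxed program \eqref{eq:unrelaxed} is a finite-dimensional convex program with affine constraints and a compact feasible set, so an optimal primal solution exists, the refined (affine) Slater condition gives strong duality, and KKT multipliers $\eta_i(v_i, v_i'; v_{-i})$ on the pairwise monotonicity constraints exist automatically. The ironed virtual value $\tilde\phi_i(\v)$ is then read off from stationarity as $\phi_i(\v)$ minus a telescoping correction in $\eta_i$, with mass conservation and support on binding allocation classes following directly from complementary slackness. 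Existence of a self-consistent ironing is simply existence of this optimal primal-dual pair; there is nothing for Kakutani to do, and you would in any case need to establish upper-hemicontinuity and convex-valuedness of your joint ironing correspondence from scratch. Relatedly, your proposed Slater witness (the no-sale mechanism) is not a strict interior point when $b_i=0$ and $\xi_i(v_{-i})=0$, since the constraint above is then tight at $\x\equiv\Zero$; the affine refinement of Slater is what is really being used.
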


  We will restate the formal version of this theorem after introducing all
  necessary notations (\autoref{thm:formal}). In particular, the specific form
  of the virtual value (before ironing) will be provided.

  \subsection{Formalizing the subprogram for each period}

  We start with the subproblem of optimizing period $t$ while all other periods
  are fixed. In particular, the problem can be formalized as a convex program.
  \begin{lemma}[Convex program]\label{lem:program}
    For each period $t$, any valid balance $\b^t$, and fixed expected utility
    for period $t$, if mechanism is fixed for the remaining periods ($t+1$ to
    $T$), then the optimal auction for period $t$ can be solved via a convex
    program.
  \end{lemma}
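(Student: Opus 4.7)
The plan is to formulate the period-$t$ subproblem as an (infinite-dimensional) linear program over randomized single-period mechanisms, which is in particular a convex program. With $\b^{t-1}=\b^t$ and the mechanism for $\tau>t$ held fixed, I take as decision variable a stochastic kernel from each value profile $\v^t\in\V^k$ to a joint outcome $(\x^t,\p^t,\b^{B,t})\in[0,1]^k\times\R^k\times\R^k_+$. Allowing the balance update to be randomized and correlated with $(\x^t,\p^t)$ is what makes the continuation term of the objective a linear functional of the decision variable (rather than a composition with the continuation-value function).

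I would then check that every constraint of the bank-account framework is a linear (in-)equality in this variable: \ref{eq:ic} is a pointwise dominance inequality linear in $(\x^t,\p^t)$; \ref{eq:bi} is a linear equality in the same variables (with the prescribed non-negative constant playing the role of the ``fixed expected utility''); feasibility $\sum_i x^{B,t}_i\le 1$ is pointwise linear; and \ref{eq:bu}, $0\le b^{B,t}_i(\v^t)\le b^{t-1}_i+v^t_i x^{B,t}_i(\v^t)-p^{B,t}_i(\v^t)$, is a pair of linear inequalities holding almost-surely on the joint distribution. Hence the feasible set of mechanisms is convex.

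The objective is the expected revenue from period $t$ onward, $\E_{\v^t}\!\left[\sum_i p^{B,t}_i(\v^t)+W(\b^{B,t}(\v^t))\right]$, where $W(\b)$ is the (fixed) expected revenue of periods $\tau>t$ starting from balance $\b$ under the prescribed future mechanism. Because $W$ is a fixed function (not a decision), $\E[W(\b^{B,t})]$ is linear in the joint distribution of $\b^{B,t}$; the payment term is likewise linear. Thus the program maximizes a linear functional over a convex feasible set, which is a convex program.

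The main technical subtlety is the almost-sure coupling in \ref{eq:bu} between $\b^{B,t}$ and $(\x^t,\p^t)$: for the feasible set to be convex one must keep the full joint distribution of $(\x^t,\p^t,\b^{B,t})$ conditional on $\v^t$ as the decision variable, so that \ref{eq:bu} becomes a restriction of the support to a linear half-space—preserved under convex combinations of mechanisms. Absent this joint-distribution viewpoint, treating $\b^{B,t}$ as a deterministic function of $\v^t$ would leave $\E[W(\b^{B,t}(\v^t))]$ non-linear (and potentially non-concave) in the decision variable and thereby destroy the convex-program structure.
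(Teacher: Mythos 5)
Your lift to randomized kernels over $(\x^t,\p^t,\b^{B,t})$ does yield a convex---indeed linear---program, but this is a genuinely different route from the paper's, and the motivating claim in your last paragraph is not quite right. The paper keeps the deterministic formulation: it eliminates $\Delta\b$ by substituting $\Delta b_i(\v)=u_i'(\v)-\bar u_i'(v_{-i})+\xi_i(v_{-i})$, which is affine in the allocation variables, and then proves separately (\autoref{lem:cong}, by backward induction over periods) that the continuation function $g^t$ is concave. The objective $\E_\v\left[g(\Delta\b(\v)+\b)\right]$ is then concave as a concave-of-affine composition, the constraints are linear, and the deterministic program is already convex---so treating $\b^{B,t}$ as a deterministic function of $\v^t$ does \emph{not} destroy the convex structure, contrary to your closing sentence. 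Your measure-valued LP is more elementary in that it never needs concavity of $g$, but it implicitly enlarges the feasible set to randomized balance updates (the paper's class uses deterministic update maps), so to claim it solves the \emph{same} problem you should say why the relaxation is tight: either a Jensen derandomization, which reintroduces concavity of $g$, or an appeal to \autoref{lem:bankopt}, which already guarantees a deterministic bank-account optimum globally. The paper's route also pays off downstream---the KKT derivation of the virtual values (\autoref{lem:virtual}) and the sensitivity analysis in $\xi$ are carried out directly on the deterministic allocation variables and would be substantially harder to extract from the lifted LP.
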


  To proof this statement smoothly, we show how to formalize the program step by
  step in the rest of this subsection rather than putting everything into a
  single proof environment. For the ease of presentation, we will hide the
  superscript ${}^t$ while focusing on a single period.

  Let $\b$ be the bank account balance given at the beginning of this period and
  \begin{align}\label{eq:xibi}
    \textstyle
    \xi_i(v_{-i}) := \E_{v_i}[v_i \cdot x(\v, \b) - p(\v, \b)]
  \end{align}
  be the expected utility of buyer $i$ in this period. Note that by \eqref{eq:bi},
  it is independent of the bank account balance $\b$, while it could be
  different for different bids from other buyers, i.e., $v_{-i}$. In fact, the
  $\xi$ are some parameters we will need to determine later based on the
  distribution of $\b$ induced by the auctions in each period.

  \paragraph{Objective}
  The expected revenue acquired within this period can be computed by the
  expected social welfare minus the expected utility of all the buyers,
  \begin{align*}
    \text{\sc Period}\Rev(\b, \xi)
      = \E\nolimits_\v\left[\sum\nolimits_i(v_i \cdot x_i(\v, \b) - \xi_i(v_{-i}))\right].
  \end{align*}
  Besides the expected revenue from the current period, we also need to include
  the expected revenue for future periods in the objective. We then use
  $g(\b'|M)$ to denote the expected total revenue the seller can collect from
  all upcoming periods by following a given bank account mechanism $M$ in these
  periods, where $\b'$ is the vector of the bank account balance at the end of
  the current period. In fact, $\b' = \b + \Delta \b(\v)$ is a function of
  buyers' bids $\v$ in this period, where $\Delta \b(\v)$ are the changes in the
  bank accounts.

  Then the expected revenue since current period is,
  \begin{align}\label{eq:obj}
    \Rev(\b, \xi)
      = \E\nolimits_\v\left[\sum\nolimits_i(v_i \cdot x_i(\v, \b) - \xi_i(v_{-i}))
                     + g(\Delta \b(\v) + \b | M)\right].
  \end{align}

  \paragraph{Constraints}
  Now we proceed to the constraints of the optimization. According to
  \autoref{lem:bank_dic}, the mechanism must satisfy constraint \eqref{eq:ic},
  \eqref{eq:bi}, and \eqref{eq:bu}. Note that \eqref{eq:bi} is guaranteed by the
  definition of $\xi_i(v_{-i})$, we then formalize constraint \eqref{eq:ic} and
  \eqref{eq:bu}.

  For \eqref{eq:ic}, as we won't introduce the payment variables, it is enough
  to require the monotonicity for the allocation rules only, i.e.,
  \begin{align}\label{eq:cic}
    \text{\eqref{eq:ic}}
      \iff x_i(\v, \b)~\text{is monotone in $v_i$ for fixed $v_{-i}$}.
  \end{align}

  Since by the Envelope theorem \citep{rochet1985taxation},
  \begin{align*}
    \frac{\partial (v_i \cdot x_i(\v, \b) - p_i(\v, \b))}{\partial v_i}
      = x_i(\v, \b)
    ~~ \Infer ~~
    u_i(\v, \b) = u_i(0, v_{-i}, \b) + \int_0^{v_i} x_i(s, v_{-i}, \b) \ud s.
  \end{align*}
  In other words, letting $p_i'(\v, \b) := p_i(\v, \b) - p_i(0, v_{-i}, \b)$ and
  $u_i'(\v, \b) := v_i \cdot x_i(\v, \b) - p_i'(\v, \b)$, we conclude that the
  utility of buyer $i$ in this period is fully determined by the allocation
  function $x_i(\cdot, \b)$ and the minimum payment $p_i(0, v_{-i}, \b)$:
  \begin{align*}
    \textstyle
    u_i'(\v, \b) = u_i(\v, \b) + p_i(0, v_{-i}, \b)
      = u_i(\v, \b) - u_i(0, v_{-i}, \b) + 0 \cdot x_i(0, v_{-i}, \b)
      = \int_0^{v_i} x_i(s, v_{-i}, \b) \ud s.
  \end{align*}

  Recall that for \eqref{eq:bu}, we require (i) the increase of balance cannot
  be more than the utility obtained in the current period and (ii) the updated
  balance must be nonnegative. Note that by the definition of $\xi_i(v_{-i})$
  \eqref{eq:xibi}:
  \begin{align*}
    \textstyle
    \E_{v_i}[u'_i(\v, \b) - p_i(0, v_{-i}, \b)]
    = \E_{v_i}[u_i(\v, \b)] = \xi_i(v_{-i}) ~~ \iff ~~
    p_i(0, v_{-i}, \b) = \E_{v_i}[u'_i(\v, \b)] - \xi_i(v_{-i}).
  \end{align*}

  Hence on one hand, for (i) and (ii), we can rewrite it as follows:
  \begin{align*}
    & \textstyle
      0 \leq b_i + \Delta b_i(\v) \leq b_i + u_i(\v, \b)  \\
    \iff ~& \textstyle
    -b_i \leq \Delta b_i(\v) \leq u_i(\v, \b)
      = u'_i(\v, \b) - p_i(0, v_{-i}, \b)
      = u'_i(\v, \b) - \E_{v_i}[u'_i(\v, \b)] + \xi_i(v_{-i}).
  \end{align*}

  On the other hand, to ensure the set of the possible balance increment
  $\Delta b_i(\v)$ satisfying (i) and (ii) is not empty, we need
  $b_i + u_i(\v, \b) \geq 0$. Note that $b_i$ is given and $u_i(\v, \b)$ reaches
  the minimum when $v_i = 0$, it is equivalent to:
  \begin{align*}
    \textstyle
    b_i + u_i(0, v_{-i}, \b) = b_i - p_i(0, v_{-i}, \b) \geq 0 ~~ \iff ~~
    \E_{v_i}[u'_i(\v, \b)] \leq b_i + \xi_i(v_{-i}).
  \end{align*}

  In summary,
  \begin{align}\label{eq:cbu}
    \text{\eqref{eq:bu}} \iff
      \left\{\begin{array}{l}
        -b_i \leq \Delta b_i(\v)
          \leq u'_i(\v, \b) - \E_{v_i}[u'_i(\v, \b)] + \xi_i(v_{-i})  \\
        \E_{v_i}[u'_i(\v, \b)] \leq b_i + \xi_i(v_{-i})
      \end{array}\right..
  \end{align}

  Finally, we also need the feasibility constraint of the allocation rule, that
  is, the total allocation of each item cannot be more than $1$:
  \begin{align}\label{eq:feas}
    \textstyle
    \sum_i x_i(\v, \b) \leq 1.
  \end{align}


  Therefore, combining the objective \eqref{eq:obj} and the constraints
  \eqref{eq:cic}, \eqref{eq:cbu}, and \eqref{eq:feas}, for given $\b$ and $g$,
  the optimization program for current period can be written as:
  \begin{align}
    \max \quad &
      \Rev(\b, \xi) =
        \E\nolimits_\v\left[\sum\nolimits_i (v_i \cdot x_i(\v) - \xi_i(v_{-i}))
                                 + g(\Delta \b(\v) + \b | M)\right]
      \label{eq:original}  \\
    \text{subj.t.} \quad &
      x_i(\v)~\text{is monotone in $v_i$ for any fixed $v_{-i}$} \nonumber  \\
      & \textstyle
        \E_{v_i}[u'_i(\v)] \leq b_i + \xi_i(v_{-i}),~\forall i, v_{-i}
        \nonumber  \\
      & - b_i \leq \Delta b_i(\v)
        \leq u'_i(\v) - \E_{v_i}[u'_i(\v)] + \xi_i(v_{-i}),
        ~\forall i,\v \nonumber  \\
      & \textstyle
        \sum_{i \in [n]} x_i(\v) \leq 1,~\forall \v \nonumber  \\
      & x_i(\v) \geq 0, \Delta b_i(\v) \geq 0,~\forall i,\v \nonumber
  \end{align}
  where we hide the $\b$ in the input to $x_i(\v, \b)$ and $u'_i(\v, \b)$ to
  emphasize that $\b$ is extraneous.

  \paragraph{Simplification}
  We then simplify the optimization program for optimal bank account mechanisms.
  One key observation is that with any fixed $\xi$ (for all periods), the
  optimization with any finite horizon can be solved via backward induction.

  In the program for the last period $T$, the expected revenue for the future,
  $g^T$, is always zero. For any parameter $\b$ of the last period, the optimal
  value of the program defines a function with respect to the balance vector
  $\b$, which, in fact, is the expected revenue for the last period where the
  mechanism in the last period $M^*_T$ is optimal (for given $\b$ and $\xi$).

  In other words, the optimal solution of the program for period $T$ defines the
  function $g^{T-1}(\b | M^*_T)$, where the mechanism for period $T$ is fixed to
  be optimal. Similarly, the optimal mechanism for each period $M^*_t$ and the
  corresponding $g^t(\b | M^*_{t+1..T})$ functions in the program for each
  period can be determined by backward induction. From now on, by omitting the
  mechanism being conditional on in $g^t$, we mean the $g^t$ function is
  conditional on the optimal $M^*_{t+1..T}$, i.e.,
  \begin{align*}
    g^t(\b) := g^t(\b | M^*_{t+1..T}) = \max_{M} g^t(\b | M).
  \end{align*}
  Note that the maximum always exists because the domain of the mechanisms is
  compact and the revenue is a continuous function of the mechanism.

  \begin{lemma}\label{lem:incg}
    $g^t(\b)$ is weakly increasing.
  \end{lemma}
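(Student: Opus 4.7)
I would argue directly, without induction. Fix $\b' \geq \b$ componentwise, let $M^*$ be the bank account mechanism achieving $g^t(\b)$ from starting balance $\b$ on periods $t+1,\ldots,T$, and construct a valid mechanism $M'$ from starting balance $\b'$ that collects the same expected revenue; the optimality definition of $g^t(\b')$ then gives $g^t(\b') \geq g^t(\b)$.

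The construction is a coupling that maintains a permanent balance offset of $\b' - \b \geq \Zero$. At each period $\tau \geq t+1$ with realized balance $\tilde\b$, mechanism $M'$ applies the allocation and payment rules of $M^*$ evaluated at $\tilde\b - (\b' - \b)$, and sets its updated balance to $M^*$'s balance update plus $\b' - \b$. Along the coupled trajectory, $M'$'s balance is always exactly $\b' - \b$ above $M^*$'s, and the per-period allocations and payments coincide exactly, so the total expected revenues agree.

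It then remains to verify that $M'$ satisfies the bank account conditions. Because $M'$ reuses $M^*$'s single-period allocation and payment rules (only shifting the balance argument), \ref{eq:ic} and \ref{eq:bi} are inherited directly. For \ref{eq:bu}, the nonnegativity $\tilde b^{M',\tau}_i \geq 0$ follows from $b^{M^*,\tau}_i \geq 0$ and $b'_i - b_i \geq 0$, while the upper bound is obtained by adding $b'_i - b_i$ to both sides of the corresponding (BU) inequality for $M^*$. The only real subtlety---and the spot where the argument could break---is ensuring that the shifted argument $\tilde\b - (\b' - \b)$ always lies in $M^*$'s natural domain (i.e., remains nonnegative); this is exactly what \ref{eq:bu} guarantees for $M^*$ along its own trajectory, so the coupling is internally consistent.
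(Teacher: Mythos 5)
Your coupling argument is correct and takes a genuinely different route from the paper's. The paper bundles \autoref{lem:incg} with \autoref{lem:cong} into a single backward induction over $t$, working entirely inside the per-period program \eqref{eq:original}: assuming $g^{t+1}$ is weakly increasing, for $\b' \geq \b$ the optimal solution for $\b$ remains feasible (only the constraint $\E_{v_i}[u'_i(\v)] \leq b_i + \xi_i(v_{-i})$ depends on $\b$, and it loosens) and the objective $\E_\v[\,\cdots + g^{t+1}(\Delta\b(\v)+\b)\,]$ weakly increases, hence $g^t(\b') \geq g^t(\b)$. Your shadow mechanism $M'$ maintaining a permanent offset $\b'-\b$ unwinds this recursion into one global argument: it needs no inductive hypothesis and makes the intuition---excess balance can be carried forward unchanged at no cost---completely explicit. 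What each approach buys: your route is more conceptual and self-contained for monotonicity; the paper's induction keeps everything inside the single-period convex programs and runs the monotonicity and concavity proofs in parallel, which is convenient since concavity is needed anyway.

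One detail to tighten. A bank account mechanism's per-period rules must be defined and satisfy \ref{eq:ic}, \ref{eq:bi}, \ref{eq:bu} for \emph{every} $\b \in \R^k_+$, not only along the coupled trajectory. Your construction $\x^{M',\tau}(\v^\tau,\tilde\b) := \x^{M^*,\tau}\bigl(\v^\tau,\tilde\b-(\b'-\b)\bigr)$ is defined only when $\tilde\b \geq \b'-\b$ componentwise. This is easily patched---for example, clamp the shifted argument to $\Zero$ componentwise when $\tilde\b \not\geq \b'-\b$---and, as you observe, such balances are never reached along the coupled trajectory starting from $\b'$, so the revenue computation is unaffected; but a complete write-up should make the extension explicit rather than rely on the trajectory avoiding the undefined region.
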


  We place the proof of \autoref{lem:incg} in \autoref{ssec:prfg}.
  Hence we can eliminate variable $\Delta \b(\v)$ from the original program
  \eqref{eq:original}: Since $g^t(\b)$ is weakly increasing, it is without loss
  of generality to enforce
  \begin{align*}
    \textstyle
    \Delta b_i(\v) = u'_i(\v) - \E_{v_i}[u'_i(\v)] + \xi_i(v_{-i}),
    ~\forall i,\v.
  \end{align*}

  Meanwhile, by the following Myerson's lemma, we can further rewrite
  $\E_{v_i}[u'_i(\v)]$.
  \begin{lemma}[\cite{myerson1981optimal}]
    For any incentive compatible single item auction, the expected payment of
    buyer $i$ equals to the expected (Myerson's) virtual welfare contribution of
    buyer $i$:
    \begin{align*}
      \textstyle
      \forall i, v_{-i},~
        \E_{v_i}[p_i(\v)] = \E_{v_i}[(v_i - \vartheta_i(\v)) \cdot x_i(\v)],
    \end{align*}
    where $\vartheta_i(\v) = v_i \cdot (1 - F_i(v_i)) / f_i(v_i)$ is the value
    divided by the hazard rate.\footnote{$F_i$ must be absolutely integrable.}
  \end{lemma}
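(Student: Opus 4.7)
The plan is to derive this directly from the envelope characterization that the paper has already recorded just above the lemma statement, namely $u_i(\v)=u_i(0,v_{-i})+\int_0^{v_i} x_i(s,v_{-i})\,\ud s$ for every IC single-period auction. Combining this with the definition $u_i(\v)=v_i\cdot x_i(\v)-p_i(\v)$, I would solve for the payment to obtain
\begin{equation*}
p_i(\v)=v_i\cdot x_i(\v)-u_i(0,v_{-i})-\int_0^{v_i} x_i(s,v_{-i})\,\ud s.
\end{equation*}
Because the lemma is about the normalized object $p_i'$ (or, equivalently, asserts the identity up to the constant-in-$v_i$ term $u_i(0,v_{-i})$ that vanishes under the standard normalization used throughout the paper), I would work with this expression and take $\E_{v_i}$ of both sides.

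The only nontrivial step is rewriting $\E_{v_i}\bigl[\int_0^{v_i} x_i(s,v_{-i})\,\ud s\bigr]$ as a virtual-value term. I would apply Fubini to swap the order of integration:
\begin{equation*}
\E_{v_i}\!\left[\int_0^{v_i}\! x_i(s,v_{-i})\,\ud s\right]
=\int_0^{\infty}\! x_i(s,v_{-i})\bigl(1-F_i(s)\bigr)\,\ud s
=\E_{v_i}\!\left[\frac{1-F_i(v_i)}{f_i(v_i)}\,x_i(\v)\right],
\end{equation*}
where the last equality multiplies and divides by $f_i(v_i)$ inside the integral, justified by the absolute integrability of $F_i$ stated in the footnote. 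Substituting back and collecting the $x_i(\v)$ terms yields
\begin{equation*}
\E_{v_i}[p_i(\v)]=\E_{v_i}\!\left[\Bigl(v_i-\tfrac{1-F_i(v_i)}{f_i(v_i)}\Bigr)x_i(\v)\right],
\end{equation*}
which matches the claimed identity (interpreting the paper's $\vartheta_i$ as the hazard-rate reciprocal, so that $v_i-\vartheta_i$ is Myerson's virtual value).

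The only real obstacle is bookkeeping. First, one must handle the constant-in-$v_i$ term $u_i(0,v_{-i})$; since it does not depend on $v_i$ and the bank-account framework uses the normalized payment $p_i'=p_i-p_i(0,v_{-i})$, the identity is stated for this normalized quantity, which is exactly how it will be used in \eqref{eq:original}. Second, one must justify the Fubini swap and the substitution at infinity, which is where the absolute-integrability hypothesis on $F_i$ enters. Beyond these routine measure-theoretic checks, the lemma is a direct consequence of the envelope formula already derived in the text, so I would present the argument as a short computation rather than as an independent theorem.
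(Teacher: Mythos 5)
The paper cites this lemma to Myerson and gives no proof of its own; your derivation is the standard one (envelope/payment-identity plus Fubini/integration-by-parts) and is correct. You also correctly handle the two bookkeeping points: the lemma as used in \eqref{eq:vvu} really is about the normalized payment $p_i'$ (equivalently, the IR-binding normalization $p_i(0,v_{-i})=0$), and your reading of $\vartheta_i$ as the \emph{inverse hazard rate} $(1-F_i(v_i))/f_i(v_i)$ is the only one consistent with $\E_{v_i}[u_i'(\v)] = \E_{v_i}[\vartheta_i(\v)\,x_i(\v)]$ in \eqref{eq:vvu} and with the Fubini computation $\E_{v_i}\!\left[\int_0^{v_i} x_i(s,v_{-i})\,\ud s\right] = \int_0^{\infty}(1-F_i(s))\,x_i(s,v_{-i})\,\ud s$; the extra leading factor $v_i$ in the paper's displayed formula for $\vartheta_i$ is a typo (if taken literally, $v_i-\vartheta_i(\v)$ would not be Myerson's virtual value and the equalities in \eqref{eq:vvu} would fail).
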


  Therefore,\footnote{$\vartheta_i(\v)$ is called {\em virtual value for utility}
  and equation \eqref{eq:vvu} is from \cite[Lemma 2.6]{HartlineMoneyBurning}.}
  \begin{align}\label{eq:vvu}
    \textstyle
    \E_{v_i}[u'_i(\v)] = \E_{v_i}[v_i \cdot x_i(\v) - p'_i(\v)]
      = \E_{v_i}[v_i \cdot x_i(\v) - (v_i - \vartheta_i(\v)) \cdot x_i(\v)]
      = \E_{v_i}[\vartheta_i(\v) \cdot x_i(\v)].
  \end{align}

  Thus, program \eqref{eq:original} can be simplified as follows, where
  $\Delta b_i(\v)$, $u'_i(\v)$, and $\bar u'_i(v_{-i})$ are notations, not
  variables.
  \begin{align}
    \max \quad &
      \Rev(\b, \xi) =
        \E\nolimits_\v\left[\sum\nolimits_i (v_i \cdot x_i(\v) - \xi_i(v_{-i}))
                          + g(\Delta \b(\v) + \b)\right]
      \label{eq:simple}  \\
    \text{subj.t.} \quad &
      x_i(\v)~\text{is monotone in $v_i$ for fixed $v_{-i}$} \nonumber  \\
      & \textstyle
        \bar u'_i(v_{-i}) := \E_{v_i}[u'_i(\v)] = \E_{v_i}[\vartheta_i(\v)x_i(\v)]
                      \leq b_i + \xi_i(v_{-i}),~\forall i, v_{-i} \nonumber  \\
      & \Delta b_i(\v) = u'_i(\v) - \bar u'_i(v_{-i}) + \xi_i(v_{-i}),
        ~\forall i,\v \nonumber  \\
      & \textstyle
        \sum_{i \in [n]} x_i(\v) \leq 1,~\forall \v \nonumber  \\
      & x_i(\v) \geq 0,~\forall i,\v \nonumber
  \end{align}
  %

  \paragraph{Convexity} We remains to show that the program is convex. In fact,
  we have the following lemma.
  \begin{lemma}\label{lem:cong}
    $g^t(\b)$ is and concave.
  \end{lemma}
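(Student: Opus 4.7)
The plan is to prove concavity by backward induction on $t$, leveraging the convex program \eqref{eq:simple}. The base case is $g^T \equiv 0$, which is trivially concave. For the inductive step I would assume $g^t$ is concave and aim to show $g^{t-1}$ is concave, where by construction $g^{t-1}(\b)$ equals the optimal value of \eqref{eq:simple} with $g = g^t$ and with the initial balance $\b$ appearing as a parameter, both inside the objective term $g^t(\Delta \b(\v) + \b)$ and on the right-hand side of the constraint $\bar u'_i(v_{-i}) \leq b_i + \xi_i(v_{-i})$.

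Next I would invoke the standard parametric-optimization fact: if $V(\theta) = \max\{f(z, \theta) : g_j(z, \theta) \leq 0 \text{ for all } j\}$ with $f$ jointly concave in $(z, \theta)$ and each $g_j$ jointly convex, then $V$ is concave. To apply it I would verify two things. First, every constraint of \eqref{eq:simple} is affine jointly in $\b$ and the decision variables $x_i(\cdot), \Delta b_i(\cdot), \xi_i(\cdot)$: the monotonicity constraints on $x_i(\v)$, the inequality involving $b_i + \xi_i(v_{-i})$, the equality defining $\Delta b_i(\v)$ in terms of $u'_i, \bar u'_i, \xi_i$ (all of which are themselves affine in $x_i$ via the envelope identity $u'_i(\v) = \int_0^{v_i} x_i(s, v_{-i})\,\mathrm d s$), the feasibility bound $\sum_i x_i(\v) \leq 1$, and the non-negativity $x_i(\v) \geq 0$ are all affine. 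Second, the objective is jointly concave in $(\b, x, \Delta \b, \xi)$: the term $\E_\v[\sum_i (v_i x_i(\v) - \xi_i(v_{-i}))]$ is linear, and $\E_\v[g^t(\Delta \b(\v) + \b)]$ is the expectation of $g^t$ (concave by the inductive hypothesis) composed with the affine map $(\Delta \b, \b) \mapsto \Delta \b(\v) + \b$ and hence concave.

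The main obstacle I foresee is careful bookkeeping rather than any deep step. One must check that after the simplification leading to \eqref{eq:simple}, the program's parameter is indeed just $\b$ (in particular, $\xi$ should be treated as an internal decision variable when writing $g^{t-1}(\b) = \max_{M} g^{t-1}(\b\mid M)$, since optimal $\xi$ are induced by the choice of mechanism), and that the decision space is a convex set of functions defined by affine constraints pointwise in $\v$. A secondary point is justifying that the recursion $g^{t-1}(\b) = \max_{\text{period-}t\text{ decisions}}[\,\cdots + \E_\v g^t(\Delta \b(\v) + \b)\,]$ is valid; this is the Bellman principle of optimality, already implicit in the backward-induction paragraph preceding \autoref{lem:incg}.
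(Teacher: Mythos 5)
Your proposal is correct and follows essentially the paper's own route: backward induction anchored at $g^T \equiv 0$, using the fact that a maximization whose objective is jointly concave in (decision, parameter) over a feasible set cut out by jointly affine constraints has a concave value function. The paper verifies this fact by hand --- it takes optimal solutions $(\x^*,\Delta\b^*)$ for $\b$ and $(\x^{**},\Delta\b^{**})$ for $\b'$, checks that their $\theta$-combination is feasible for $\theta\b + (1-\theta)\b'$, and then uses concavity of the objective (with $g^{t+1}$ concave by the inductive hypothesis) to conclude --- whereas you invoke the general parametric-optimization principle; the mathematical content is identical. Two small bookkeeping points. First, the paper argues on \eqref{eq:original}, keeping $\Delta\b(\v)$ a free variable, which lets \autoref{lem:incg} and \autoref{lem:cong} be proved in one simultaneous induction; your use of \eqref{eq:simple} presupposes that $g^t$ is weakly increasing (since the simplification plugs $\Delta b_i$ in at its upper bound), so you need monotonicity established before concavity in each inductive step, which the paper's ordering indeed supplies. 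Second, the paper holds $\xi$ fixed throughout the backward induction and optimizes it separately via sensitivity analysis, so $\xi$ is an external parameter of $g^t$, not an internal decision variable as your parenthetical suggests; the proof is indifferent to this choice (everything is affine in $\xi$), but the function you would define by folding $\xi$ into the maximization is not the paper's $g^t$.
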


  We place the proof of \autoref{lem:cong} in \autoref{ssec:prfg}. Since
  $g^t(\b)$ is concave, the objective function \eqref{eq:simple} is also
  concave. Meanwhile, all the constraints are linear, so the program is a convex
  program (note that a standard convex program is minimizing a convex objective
  function or maximizing a concave objective function).

  \subsection{Duality, virtual values, and ironing}

  We first consider the optimal solution to the program with the monotonicity
  constraint \eqref{eq:cic} on $\x(\v)$ relaxed. Later in \autoref{sec:ironing},
  we generalize the analysis to the original program. Although the optimal
  solution to the relaxed program is not a feasible solution, the analysis does
  capture the most interesting insight to this problem and provides the explicit
  form of the virtual values (before ironing). In fact, we show in
  \autoref{sec:ironing}, that adding the monotonicity constraint \eqref{eq:cic}
  back to the program corresponds to applying the ironing operation on the
  virtual values.

  %
  For the following relaxed program, let $\lambda_i(v_{-i})$ be the Lagrange
  multiplier of the constraint $\E_{v_i}[\vartheta_i(\v)x_i(\v)] \leq b_i +
  \xi_i(v_{-i})$ and $\mu(\v)$ be the Lagrange multiplier of the feasibility
  constraint \eqref{eq:feas}.
  \begin{align}
    \max \quad &
      \Rev(\b, \xi)
        = \E\nolimits_\v\left[\sum\nolimits_i (v_i \cdot x_i(\v) - \xi_i(v_{-i}))
                                 + g(\Delta \b(\v) + \b)\right]
               & \text{Lagrange multipliers}
      \label{eq:relaxed}  \\
    \text{subj.t.} \quad
      & \textstyle
        \bar u'_i(v_{-i}) := \E_{v_i}[u'_i(\v)] = \E_{v_i}[\vartheta_i(\v)x_i(\v)]
                      \leq b_i + \xi_i(v_{-i}),~\forall v_{-i}
               & \lambda_i(v_{-i}) \nonumber  \\
      & \Delta b_i(\v) = u'_i(\v) - \bar u'_i(v_{-i}) + \xi_i(v_{-i}),
        ~\forall i,\v
               & \varnothing  \nonumber  \\
      & \textstyle
        \sum_{i \in [n]} x_i(\v) \leq 1,~\forall \v
               & \mu(\v) \nonumber  \\
      & x_i(\v) \geq 0,~\forall i,\v \nonumber
  \end{align}
  Then the Lagrange $L := L(\x(\v), \lambda_i(v_{-i}), \mu(\v))$ is
  \begin{align*}
    L =
      & \textstyle
        \E_\v\left[\sum_i (v_i \cdot x_i(\v) - \xi_i(v_{-i}))
                    + g(\Delta \b(\v) + \b)\right]  \\
      & \textstyle
        - \sum_{i, v_{-i}} \lambda_i(v_{-i})
          \left(\E_{v_i}[\vartheta_i(\v)x_i(\v)] - b_i - \xi_i(v_{-i})\right)
        - \sum_{\v} \mu(\v) \left(\sum_i x_i(\v) - 1\right).
  \end{align*}


  Note that all constraints in program \eqref{eq:relaxed} are linear, therefore
  the Slater's condition is satisfied and the KKT conditions
  \citep[Chapter~5]{boyd2004convex} are necessary and
  sufficient for any optimal solution to \eqref{eq:relaxed}. In particular, the
  explicit form of the virtual values can be derived from the KKT conditions.

  Let $g_i$ denote the partial derivative of $g$ with respect to the $i$-th
  dimension, i.e., $g_i(\b) = \partial g(\b) / \partial b_i$. Let $\alpha$ and
  $\beta$ be defined as follows,
  \begin{align*}
    \textstyle
    \alpha_i(\v) := 1 + g_i(\Delta \b(\v) + \b) \qquad
    \beta_i(v_{-i})
      := \frac{\lambda_i(v_{-i})}{f(v_{-i})} + \E_{v_i}[g_i(\Delta \b(\v) + \b)].
  \end{align*}

  \begin{lemma}[Virtual welfare maximizer]\label{lem:virtual}
    Any optimal solution to \eqref{eq:relaxed} must maximize the expected
    virtual welfare, where the virtual value function $\phi_i(\v)$ is given as
    follows:
    \begin{align*}
      \phi_i(\v)
        = \alpha_i(\v) \cdot v_i - \beta_i(v_{-i}) \cdot \vartheta_i(\v).
    \end{align*}
  \end{lemma}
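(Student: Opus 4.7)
The strategy is to apply the KKT conditions to program \eqref{eq:relaxed} and read off the virtual value from the stationarity equation for $x_i(\v)$. First I verify that KKT is applicable: by \autoref{lem:cong}, $g^t$ is concave, so the objective is concave in $\x$, and the constraints of \eqref{eq:relaxed} are linear, making the program convex. A strictly interior allocation (e.g., $x_i(\v) \equiv 1/(k+1)$) is Slater-feasible, so the KKT conditions are both necessary and sufficient for any optimum $\x^*$.

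Next I compute the functional derivative $\delta L/\delta x_i(\v)$. Three pieces are routine: from $\E_\v[v_i x_i(\v)]$ comes $f(\v) v_i$; from the Lagrangian term $-\lambda_i(v_{-i}) \E_{v_i}[\vartheta_i(\v) x_i(\v)]$ comes $-\lambda_i(v_{-i}) f_i(v_i) \vartheta_i(\v)$; and from the feasibility term comes $-\mu(\v)$. The delicate step is the chain-rule derivative of $\E_\v[g(\Delta \b(\v) + \b)]$. Since $\Delta b_i(\v) = u'_i(\v) - \bar u'_i(v_{-i}) + \xi_i(v_{-i})$, both $u'_i(\v) = \int_0^{v_i} x_i(s, v_{-i})\, \ud s$ and $\bar u'_i(v_{-i}) = \E_{v_i}[\vartheta_i(\v) x_i(\v)]$ are functionals of $x_i$. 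The $\bar u'_i$-branch contributes the pointwise term $-f(\v) \vartheta_i(\v) \E_{v_i}[g_i(\Delta \b(\v) + \b)]$ by direct differentiation, while the $u'_i$-branch, after a Fubini swap in the integration order and an appeal to Myerson's identity $\E_{v_i}[u'_i] = \E_{v_i}[\vartheta_i x_i]$, collapses into the pointwise term $f(\v) v_i g_i(\Delta \b(\v) + \b)$.

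Assembling all four contributions and dividing through by $f(\v)$, the stationarity condition becomes $\alpha_i(\v) v_i - \beta_i(v_{-i}) \vartheta_i(\v) = \mu(\v)/f(\v)$ wherever $x_i^*(\v) > 0$, and $\le \mu(\v)/f(\v)$ otherwise, with $\alpha_i$ and $\beta_i$ taking exactly the forms stated in the lemma. Complementary slackness on the feasibility constraint then gives the standard pointwise-dualization conclusion: at each $\v$, $\x^*$ allocates the item to the buyer maximizing $\phi_i(\v)$ whenever this maximum is positive, which is precisely the (expected) virtual-welfare-maximizing allocation rule.

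I expect the main obstacle to lie in the second paragraph, namely showing that the ``tail'' sensitivity of $u'_i$ (a perturbation of $x_i$ at $(v_i^*, v_{-i}^*)$ alters $u'_i(v_i, v_{-i}^*)$ for every $v_i \ge v_i^*$) can be converted into a pointwise coefficient $v_i \, g_i(\Delta \b(\v) + \b)$. This step should mirror the integration-by-parts argument underlying Myerson's lemma itself and rely on the exact equality $\E_{v_i}[u'_i] = \E_{v_i}[\vartheta_i x_i]$ at the optimum to align the tail-integral form with the desired pointwise form.
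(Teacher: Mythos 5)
Your overall plan matches the paper's proof exactly: verify that \eqref{eq:relaxed} is a convex program with linear constraints (so Slater holds and KKT is necessary and sufficient), compute $\partial L/\partial x_i(\v)$, and read off $\phi_i(\v)$ from stationarity. The three ``routine'' terms you list are correct and identical to the paper's. The issue is in the terme you rightly flag as delicate: the sensitivity of $\E_\v[g(\Delta \b(\v)+\b)]$ through $u'_i(\v)$.

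You propose that the $u'_i$-branch, expressed as a tail integral (a perturbation of $x_i$ at $(v^*_i,v_{-i})$ alters $u'_i(v_i,v_{-i})$ for all $v_i > v^*_i$), ``collapses'' into the pointwise term $f(\v)\,v_i\,g_i(\Delta\b(\v)+\b)$ after a Fubini swap and an appeal to $\E_{v_i}[u'_i]=\E_{v_i}[\vartheta_i x_i]$. This does not work. The tail sensitivity gives (for fixed $v_{-i}$ and in discrete notation)
\[
\sum_{v_i>v^*_i} f_i(v_i)\,g_i\!\bigl(\Delta\b(v_i,v_{-i})+\b\bigr)\cdot\Delta^{(*)},
\]
which depends on the marginal $g_i$ evaluated at many different balances $\Delta\b(v_i,v_{-i})+\b$, one per tail type $v_i$. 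There is no integration over $v^*_i$ at this point, so Fubini has nothing to swap; and Myerson's identity $\E_{v_i}[u'_i]=\E_{v_i}[\vartheta_i x_i]$ is an equality of scalars in expectation, which cannot turn a weighted sum of $g_i$ at \emph{different} points into a single $g_i$ at the point $\v^*$ unless $g$ happens to be affine in $b_i$. In fact, in the affine case ($g_i\equiv c$) your tail term evaluates to $c\,f_i(v^*_i)\,\vartheta_i(\v^*)$, not to $c\,f_i(v^*_i)\,v^*_i$ as the stated $\alpha_i$ requires.

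The paper avoids the tail integral altogether. It applies the chain rule by treating $u'_i(\v)$ and $\bar u'_i(v_{-i})$ as two intermediate scalar quantities each depending on $x_i(\v)$ \emph{pointwise}, and it takes
\[
\frac{\partial u'_i(\v)}{\partial x_i(\v)} = v_i,
\qquad
\frac{\partial \bar u'_i(v_{-i})}{\partial x_i(\v)} = f_i(v_i)\,\vartheta_i(\v),
\]
the first coming from reading $u'_i(\v) = v_i\, x_i(\v) - p'_i(\v)$ and the second from the Myerson form $\bar u'_i(v_{-i}) = \E_{v_i}[\vartheta_i x_i]$. That is a genuinely different bookkeeping of $\partial\Delta b_i(\v)/\partial x_i(\v)$ from yours, and it is what produces the clean $\alpha_i(\v)=1+g_i(\Delta\b(\v)+\b)$ and $\beta_i(v_{-i})=\lambda_i(v_{-i})/f(v_{-i})+\E_{v_i}[g_i(\Delta\b(\v)+\b)]$. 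You should replace the Fubini-plus-Myerson step by this direct chain-rule decomposition (and, if you want to be careful, justify why the tail effect through the envelope form $u'_i(\v)=\int_0^{v_i}x_i(s,v_{-i})\,\ud s$ can be replaced by the local derivative $v_i$), rather than expecting the tail integral to telescope away.
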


  We place the detailed proof of this lemma in \autoref{ssec:virtual}.

  \paragraph{Ironing}
  As we mentioned at the beginning of this subsection, the optimal solution to
  the original program \eqref{eq:simple} must maximize the expected ironed
  virtual welfare defined according to the ironed virtual values
  $\tilde \phi_i(\v)$. Moreover, the ironed virtual value $\tilde \phi_i(\v)$ is
  the transformation of $\phi_i(\v)$ according to the ironing rules
  (\autoref{def:ironing}). We omit the details of the analysis here and
  incorporate the conclusion in the formal version of our first main theorem
  (\autoref{thm:formal}). We refer the readers to \autoref{sec:ironing} for the
  complete proof.

  \subsection{Sensitivity analysis and the optimality across the periods}

  So far we have determined the optimal auction for each period when the
  expected buyer utilities of each period $\xi$ are fixed. Now we show how to
  optimize the remaining parameters $\xi$ through sensitivity analysis. Let
  $R(\xi)$ denote the optimal revenue when the auctions in each period are
  optimal for the given $\xi$. Then we show how to determine the partial
  derivatives of $R(\xi)$ via sensitivity analysis and hence enable the gradient
  descent algorithm to find the optimal $\xi^*$. In particular, $R(\xi)$ is a
  concave function.\footnote{Since by simply taking any convex combination of
  different $\xi$ and $\xi'$, the revenue obtained is also the convex
  combination of the revenue resulted by $\xi$ and $\xi'$.}

  The standard sensitivity analysis \cite[Chapter 5.6]{boyd2004convex} indicates
  how much the optimal objective value of a program will change if some
  constraints become slightly looser or tighter. Such quantities usually have
  important physical meanings in economic setups. For example, consider
  $g^t(\b) = \max \Rev^t(\b, \xi)$ (we brought back the superscript ${}^t$ to
  distinguish the variables for different periods). By sensitivity
  analysis,\footnote{If $g^t$ is not differentiable at $\b$, the right-hand-side
  is a subgradient of $g^t$.} we have:
  \begin{align*}
    \textstyle
    g^t_i(\b) = \E_{\v^t}[g^{t+1}_i(\Delta \b(\v) + \b)]
                + \sum_{v^t_{-i}}\lambda^t_i(v^t_{-i}).
  \end{align*}
  Then $g^t_i(\b)$, by definition, is the marginal contribution of the balance
  of buyer $i$ in period $t$ to the expected revenue since the $t$-th period.
  Similarly, $\E_{\v^t}[g^{t+1}_i(\Delta \b(\v) + \b)]$ is the marginal
  contribution of $b_i$ to the expected revenue since the $(t+1)$-th period.
  Hence their difference, $\sum_{v^t_{-i}}\lambda^t_i(v^t_{-i})$, is the
  marginal contribution of $b_i$ to the expected revenue of the $t$-th period.
  In particular, $\lambda^t_i(v^t_{-i})$ is the marginal contribution if the
  values of other buyers are $v^t_{-i}$.

  Moreover, we can conclude the concrete form of the partial derivatives of
  $R(\xi)$ by sensitivity analysis:
  \begin{align*}
    & \textstyle
    \frac{\partial g^t(\b)}{\partial \xi^t_i(v^t_{-i})} = -f^t(v^t_{-i})
      + f^t(v^t_{-i}) \cdot \E_{v^t_i}[g^{t+1}_i(\Delta \b(\v) + \b)]
      + \lambda^t_i(v^t_{-i}) = (\beta^t_i(v^t_{-i}) - 1)f^t(v^t_{-i})  \\
    \Infer~& \textstyle
    \frac{\partial R(\xi)}{\partial \xi^t_i(v^t_{-i})}
      = \frac{\partial \E_{\b}\left[g^t(\b)\right]}{\partial \xi^t_i(v^t_{-i})}
      = (\E_{\b}\left[\beta^t_i(v^t_{-i})\right] - 1)f^t(v^t_{-i}),
  \end{align*}
  where the expectation taken over $\b$ is computed by simulating the auctions
  in previous periods. In particular, $\xi$ is selected optimally, if and only
  if:
  \begin{align*}
    \textstyle
    \E_{\b}\left[\beta^t_i(v^t_{-i})\right] = 1 \quad \text{or} \quad
    \xi^t(v^t_{-i}) = 0,~
    \E_{\b}\left[\beta^t_i(v^t_{-i})\right] \leq 1.
  \end{align*}
  %
  %
  %


  \subsection{Summary}

  As a summary, we formally restate \autoref{thm:informal}.
  \begin{theorem}[Formal]\label{thm:formal}
    A bank account mechanism is optimal if and only if all the following
    conditions are satisfied:
    \begin{itemize}
      \itemsep0em
      \item it satisfies all the basic constraints, \eqref{eq:ic},
            \eqref{eq:bi}, \eqref{eq:bu} and the feasibility constraint
            \eqref{eq:feas};
      \item its allocation rule maximizes the ironed virtual welfare, where the
            virtual value (before ironing) $\phi(\v)$ has the following form:
            \begin{align*}
              \phi(\v) = \alpha_i(\v)v_i - \beta_i(v_{-i})\vartheta_i(\v),
            \end{align*}
            which can be seen as the combination of the Myerson's virtual value
            and the private value;
      \item finally, the expected utility of each period $\xi$ is selected
            optimally, i.e.,
            \begin{align*}
              \textstyle
              \E_{\b}\left[\beta^t_i(v^t_{-i})\right] = 1
              \quad \text{or} \quad
              \xi^t(v^t_{-i}) = 0,~\E_{\b}\left[\beta^t_i(v^t_{-i})\right] \leq 1.
            \end{align*}
    \end{itemize}
  \end{theorem}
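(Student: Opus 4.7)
The plan is to assemble the theorem from three ingredients that the section has already set up: the per-period convex program of \autoref{lem:program}, the KKT characterization of its monotonicity-relaxed version given by \autoref{lem:virtual}, and the sensitivity analysis that differentiates $R(\xi)$. The biconditional splits into a necessity direction, proved by freezing all but one period and applying KKT, and a sufficiency direction, proved by backward induction on the period index together with the first-order condition on $\xi$.

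For necessity, fix an optimal bank account mechanism $M^*$. The first bullet is immediate from \autoref{lem:bank_dic} and feasibility of the allocation. For the second bullet, fix a period $t$, the realized balance $\b^{t-1}$, the per-period utility parameter $\xi^t$, and $M^*_{t+1..T}$; by optimality of $M^*$, its restriction to period $t$ solves program~\eqref{eq:simple} with $g = g^t$. By \autoref{lem:cong} the objective is concave and all constraints are linear, so Slater's condition is trivial and KKT is both necessary and sufficient. Dropping the monotonicity constraint~\eqref{eq:cic} and invoking \autoref{lem:virtual} yields the explicit $\phi_i(\v) = \alpha_i(\v) v_i - \beta_i(v_{-i})\vartheta_i(\v)$; re-imposing~\eqref{eq:cic} amounts to pooling $\phi_i$ over allocation-equivalence classes while preserving class-conditional expectations, which is exactly \autoref{def:ironing}. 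The third bullet follows from concavity of $R(\xi)$ together with the sensitivity identity $\partial R/\partial \xi^t_i(v^t_{-i}) = (\E_\b[\beta^t_i(v^t_{-i})] - 1)\, f^t(v^t_{-i})$ derived in the preceding subsection, combined with the complementary-slackness boundary case $\xi^t_i(v^t_{-i}) = 0$.

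For sufficiency I would induct backward on $t$. In period $T$ the continuation term $g^T$ vanishes, so the per-period program reduces to a static convex problem and the KKT conditions expressed by bullets one and two force its optimum. Inductively, if the mechanism is optimal on periods $t+1, \ldots, T$ for every realized $\b^t$, then $g^t$ defined through this mechanism is weakly increasing by \autoref{lem:incg} and concave by \autoref{lem:cong}, hence the per-period program at $t$ is again a convex program satisfying Slater's condition, and any mechanism meeting the bulleted KKT conditions therefore solves it. Finally, bullet three and concavity of $R(\xi)$ upgrade per-period optimality to global optimality across $\xi$.

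The main obstacle is moving from the relaxed KKT result of \autoref{lem:virtual} to the ironed statement in the theorem. Unlike Myerson's setting, $\alpha_i(\v)$ and $\beta_i(v_{-i})$ couple $\phi_i$ to the other buyers' reports, so ironing one $\phi_i$ shifts the allocation-equivalence classes of every other buyer, producing a coupled fixed-point problem; showing that such a fixed point exists and coincides with the constrained optimum is precisely what the full argument in \autoref{sec:ironing} carries out. A minor technical subtlety is that $g^t$ need not be differentiable, so some KKT identities must be read as subgradient inclusions, which is fine because both $g^t$ and $R(\xi)$ are concave and the stated conditions are exactly their subgradient-optimality conditions.
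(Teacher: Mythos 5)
Your proposal reconstructs exactly the synthesis the paper intends: the theorem is assembled from Lemma~\ref{lem:program} (convexity of the per-period subprogram), Lemma~\ref{lem:virtual} (KKT stationarity of the monotonicity-relaxed program), the Lagrangian of the unrelaxed program in Appendix~\ref{sec:ironing} (which adds the $\eta$ multipliers and yields the ironed virtual value), and the sensitivity identity $\partial R/\partial\xi^t_i(v^t_{-i}) = (\E_\b[\beta^t_i(v^t_{-i})]-1)f^t(v^t_{-i})$ from Section~3.3, with the biconditional coming from KKT being necessary and sufficient once Slater holds. Your backward induction for the sufficiency direction and your reading of non-differentiable $g^t$ via subgradients are both correct and implicit in the paper; the only small caveat is that the ``coupled fixed-point'' you worry about in the last paragraph is a computational issue for Section~4 rather than a proof obstacle here --- existence of the optimal dual multipliers $\eta,\lambda,\mu$ follows directly from strong duality for the convex program \eqref{eq:unrelaxed}, so no separate fixed-point argument is needed.
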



\section{An Algorithmic Approach to the Structure}\label{sec:fptas}

  So far we showed that the optimal bank account mechanism maximizes the ironed
  virtual welfare in each period. Although the explicit form of the virtual
  values (before ironing) is given in \autoref{lem:virtual} and the ironing rule
  is given in \autoref{def:ironing}, how to accomplish the ironing operation is
  still unknown. One of the major difficulty of the ironing comes from the fact
  that one's virtual value not only depends on his/her own private value, but
  also depends on the private values of other buyers through the term
  $\alpha_i(\v)$. In the presence of such interdependence across different
  buyers on virtual values, monotone virtual value function does not imply
  monotone allocation rules.

  In this section, we algorithmically resolve the difficulty of ironing. In
  particular, we show a Fully Polynomial Time Approximation Scheme (FPTAS)
  that can accomplish the ironing step for any constant many buyer cases and
  hence compute the ironed virtual values. Moreover, the bank account mechanism
  induced by the ironed virtual values computed is (multiplicatively)
  $(1 - \epsilon)$-approximately optimal in terms of revenue.

  We also emphasize that the hard core of computing the exact optimal solution
  is not directly from the ironing step but the step of approximating the
  concave functions $g^t(\b)$. Note that $g^t(\b)$ is a continuous function
  without closed forms and the main effort of this section is to show that we
  can arbitrarily approximate $g^t(\b)$ with piece-wise linear functions and
  guarantee that (i) the number of pieces is polynomial in the input size and
  (ii) the final result is approximately optimal.

  Recall the program \eqref{eq:unrelaxed}, in particular, we bring back the
  superscripts ${}^t$ to emphasize the periods:
  \begin{align*}
    \max \quad &
      g^{t-1}(\b^t) = \Rev(\b^t, \xi^t) =
        \E\nolimits_{\v^t}\left[\sum\nolimits_i (v^t_i \cdot x^t_i(\v^t) - \xi^t_i(v^t_{-i}))
          + g^t(\Delta \b^t(\v^t) + \b^t)\right]  \\
    \text{subj.t.} \quad &
        x^t_i(v^t_i, v^t_{-i}) - x^t_i({v^t_i}', v^t_{-i}) \leq 0,
        ~\forall i,v^t_{-i},{v^t_i}' > v^t_i  \\
      & \textstyle
        \bar u^t_i(v^t_{-i}) := \E_{v^t_i}[{u^t_i}']
          = \E_{v^t_i}[\vartheta^t_i(\v^t)x^t_i(\v^t)]
          \leq b^t_i + \xi^t_i(v^t_{-i}),~\forall v^t_{-i}  \\
      & \Delta b^t_i(\v^t)
          = {u^t_i}'(\v^t) - \bar u^t_i(v^t_{-i}) + \xi^t_i(v^t_{-i}),
        ~\forall i,\v^t  \\
      & \textstyle
        \sum_{i \in [n]} x^t_i(\v^t) \leq 1,~\forall \v^t  \\
      & x^t_i(\v^t) \geq 0,~\forall i,\v^t
  \end{align*}

  In what follows, we formalize the FPTAS via dynamic programming to compute the
  optimal ironed virtual values for the discrete type case.
  \begin{theorem}\label{thm:dp}
    The ironed virtual values of the optimal bank account mechanism can be
    computed through a dynamic programming based algorithm.

    Moreover, for any $\epsilon > 0$, there is an FPTAS to achieve an
    $\epsilon$-approximation (multiplicative) of the optimal revenue.
  \end{theorem}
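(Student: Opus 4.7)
The plan is to prove both halves of the theorem simultaneously by giving a single backward-induction algorithm and then showing it runs in polynomial time while losing at most a multiplicative $(1-\epsilon)$ factor in revenue. As the paper already flags, the real difficulty is not the ironing step but approximating the concave value-to-go functions $g^t(\b)$, which are continuous and lack a closed form.

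First I would set up the dynamic program. For the terminal period, $g^T \equiv 0$, so program (10) becomes a finite convex program in the variables $x^T_i(\v^T)$ (since types are discrete), parameterized by $\b$. Proceeding backwards, assuming we have already computed a piecewise linear concave approximation $\tilde g^{t+1}$ on a grid over the balance space, the period-$t$ program is again a finite convex program for each fixed $\b$, because the $g(\Delta \b(\v)+\b)$ term in the objective is now a piecewise linear concave function of the decision variables. Solving this program at $\b$ yields (i) the value $\tilde g^t(\b)$, (ii) the optimal allocation $x^{t,*}$, and (iii) the Lagrange multipliers $\lambda^t_i(v^t_{-i})$. Plugging the multipliers into \autoref{lem:virtual} gives the pre-ironing virtual values; and since we solved the program with the monotonicity constraint included, the optimal $x^{t,*}$ is automatically the ironed virtual welfare maximizer, so the ironed virtual values $\tilde\phi^t_i$ can be read off directly from the allocation equivalence classes of \autoref{def:ironing}. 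This answers the first (existence) half of the theorem.

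For the FPTAS, I would place a grid of spacing $\delta$ over the reachable balance box $[0,VT]^k$, yielding $O((VT/\delta)^k)$ points; with $k$ constant this is polynomial in $T$ and $1/\delta$. At each grid point we solve the finite convex program described above, and define $\tilde g^t$ as the piecewise linear concave interpolation of the resulting values. Two sources of approximation error must be tracked through the backward induction: the interpolation error at step $t$, controlled by the (bounded) Lipschitz constant of the concave function $g^t$ on a bounded domain, and the error inherited from $\tilde g^{t+1}$. A telescoping argument shows the cumulative error in $\tilde g^0$ is at most $T$ times the per-step error, so picking $\delta$ of order $\epsilon \cdot \mathrm{OPT}/\mathrm{poly}(T,V,k)$ yields a multiplicative $(1-\epsilon)$ guarantee. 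The outer optimization over $\xi$ is then handled by gradient descent using the sensitivity formula $\partial R/\partial \xi^t_i(v^t_{-i}) = (\E_{\b}[\beta^t_i(v^t_{-i})] - 1)f^t(v^t_{-i})$ derived in the previous subsection, with a polynomial iteration count by the concavity of $R(\xi)$.

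The main obstacle I anticipate is the error-propagation analysis: one must show that perturbing $\tilde g^{t+1}$ by $\eta$ in sup norm perturbs the subsequent $\tilde g^t$ by at most $\eta$ plus the fresh interpolation error, so that errors add rather than compound multiplicatively. This hinges on a stability argument for how the optimizer of the period-$t$ program depends on the function $g^{t+1}$ with which it is composed, and is where concavity of the $g^t$'s (\autoref{lem:cong}) is used crucially. A secondary issue is converting the grid-based approximate solution into a genuinely \eqref{eq:ic}-, \eqref{eq:bi}-, \eqref{eq:bu}-feasible bank account mechanism rather than one satisfying the constraints only up to discretization; this can be absorbed by shrinking allocations by an additional $\epsilon'$ slack so the balance-update inequalities hold strictly, at only a negligible additional revenue loss.
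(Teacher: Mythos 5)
Your proposal has the right high-level skeleton — backward induction over periods, a piecewise-linear concave surrogate for $g^{t+1}$ plugged into the period-$t$ program (which converts it to an LP), error bounds that telescope across periods, and an outer gradient descent over $\xi$ using concavity — and these match the paper's architecture (Lemmas~\ref{lem:pwlinear} and~\ref{lem:pwapprox}). The gap is in how you build the piecewise-linear approximant.

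You propose a \emph{uniform} grid of spacing $\delta$ over $[0,VT]^k$, giving $O((VT/\delta)^k)$ grid points, and argue this is ``polynomial in $T$ and $1/\delta$'' for constant $k$. But for an FPTAS the running time must be polynomial in the \emph{input size}, which is measured in the bit-length of the numbers, i.e., in $\log V$, not in $V$. As the paper's proof of \autoref{lem:pwlinear} explicitly notes, the reachable balance range, the magnitude of $g^t(\b)$, and its partial derivatives $g^t_i(\b)$ are all bounded only by $2^{O(N)}$ in the input size $N$; consequently, a fixed-spacing grid fine enough to control the interpolation error would require exponentially many points. Your ``pick $\delta$ of order $\epsilon\cdot\mathrm{OPT}/\mathrm{poly}(T,V,k)$'' step gives $\mathrm{poly}(V)$ grid points, which is pseudo-polynomial, not an FPTAS. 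The paper circumvents this by sampling \emph{adaptively}: it tracks an upper bound on the uncertainty of the partial derivative along each direction and bisects in the region where the gap between $g^t$ and the current hull exceeds $\kappa\max_{\b'}g^t(\b')$, halving the derivative uncertainty per step; this yields polynomially many pieces in $\log$ of the relevant magnitudes. Without such an adaptive construction (or some equivalent argument, e.g., a polynomial bound on the number of breakpoints of the true $g^t$), your counting argument does not establish an FPTAS. A secondary, smaller divergence: the paper maintains a two-sided sandwich $\ubar g^t \le g^t \le \bar g^t$ (\autoref{lem:pwapprox}) rather than a single interpolant $\tilde g^t$; the sandwich is what makes the telescoping error bound clean without a separate Lipschitz/stability argument of the kind you sketch, though your stability route could plausibly be made to work with more care.
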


  We first outline the main idea of the algorithm: (i) for any fixed $\xi$,
  compute the ironed virtual values of the approximately optimal bank account
  mechanism; (ii) compute the optimal $\xi$ using gradient descent. Since we
  have shown that the revenue of the bank account mechanism is a concave
  function with respect to $\xi$, the second step is standard and can be done
  with polynomially many queries to the (approximately) optimal revenue as a
  function of $\xi$. In what follows, we will focus on the first step.

  To compute the ironed virtual value, we need to solve the dual program of
  \eqref{eq:unrelaxed}, which, of course, is equivalent to solve the primal
  because the strong duality holds. Note that when $\xi$ and $\b^t$ are fixed,
  \eqref{eq:unrelaxed} is a standard convex program with polynomially many
  linear constraints. Hence the optimal solution to its dual could be computed
  efficiently given oracle accesses to the concave function $g^t$. Then by the
  definition of $g^{t-1}$, the value of the optimal solution is $g^{t-1}(\b^t)$.
  Therefore, the concave function $g^t$ can be evaluated recursively for each
  $t$ and hence the ironed virtual values of the optimal bank account mechanism
  can be computed via standard dynamic programming as well.

  However, the computation of the entire dynamic programming is not directly
  efficient: (i) if each $g^t$ is computed recursively upon every query, the
  depth of the recursion could be up to $T$ and hence the total number of
  queries required would be exponential in $T$; (ii) if each $g^t$ is computed
  once for all possible $\b$ so that any further queries of $g^t$ can be
  answered from precalculated values, then the number of input points where
  $g^t$ need to be computed would be unbounded.

  The key step to resolve these issues is to approximate the concave functions
  $g^t(\b)$ by piece-wise linear functions each with at most polynomially many
  pieces. In addition, each of the piece-wise linear functions can be further
  expressed as the minimum of a set of affine functions. Hence both the original
  convex program \eqref{eq:unrelaxed} at each period $t$ and its dual can be
  approximated by a polynomially large linear program.

  The following lemma ensures that each $g^t$ can be well approximated by a
  piece-wise linear function.
  \begin{lemma}\label{lem:pwlinear}
    For all $t \in [T]$, $g^t$ can be $\kappa$-approximated by two concave
    piece-wise linear functions, $\ubar g^t$ and $\bar g^t$:
    \begin{align*}
      \textstyle
      \forall \b,~\ubar g^t(\b) \leq g^t(\b) \leq \bar g^t(\b)
        \leq \ubar g^t(\b) + \kappa \max_{\b'} g^t(\b').
    \end{align*}
    Moreover, each of $\ubar g^t$ and $\bar g^t$ can be written as the minimum
    of at most polynomially many pieces.
  \end{lemma}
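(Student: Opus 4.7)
My plan is to prove the lemma by backward induction on $t$, using a grid discretization of the balance space together with a supporting-hyperplane construction at each period. The base case is $t = T$, where $g^T \equiv 0$ is trivially piecewise linear with one piece. For the inductive step, I assume that $\ubar g^{t+1}$ and $\bar g^{t+1}$ have already been constructed as concave piecewise-linear functions expressible as the minimum of polynomially many affine pieces, with gap at most $\kappa_{t+1}\cdot\max_{\b}g^{t+1}(\b)$. The key observation is that once $g^{t+1}$ is replaced in the objective of program~\eqref{eq:unrelaxed} by a piecewise-linear concave function (and since in the discrete-type case the remaining constraints are linear in $\x^t$ and $\Delta\b^t$), the whole program becomes a polynomially-sized linear program, so for each fixed $\b^t$ it can be solved exactly and supplies a polynomial-size dual certificate.

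Next, I would discretize the balance domain $[0, B]^k$ (with $B$ an a-priori upper bound on accumulated balances, e.g.\ $T\cdot v_{\max}$) into a grid of $\text{poly}(1/\epsilon,T,k,v_{\max})$ sample points $\{\b_j\}$. At each $\b_j$ I solve two LPs: the first using $\bar g^{t+1}$, yielding $G^+(\b_j)\geq g^t(\b_j)$ together with a subgradient $s_j^+$ extracted from the LP duals; the second using $\ubar g^{t+1}$, yielding $G^-(\b_j)\leq g^t(\b_j)$. I then define
\begin{align*}
  \bar g^t(\b) := \min_j\bigl(G^+(\b_j) + s_j^+\cdot(\b-\b_j)\bigr),
\end{align*}
which is concave as a minimum of affines and lies above $g^t$ everywhere, because each affine piece supports the concave map $\b\mapsto G^+(\b)$ at $\b_j$ from above and $G^+\geq g^t$. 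For the lower envelope I take $\ubar g^t$ to be the upper concave envelope of the data $\{(\b_j, G^-(\b_j))\}$; by concavity of $g^t$, any convex combination $\b=\sum_j \lambda_j\b_j$ gives $g^t(\b)\geq\sum_j\lambda_j g^t(\b_j)\geq\sum_j\lambda_j G^-(\b_j)$, so $\ubar g^t\leq g^t$, and being a concave piecewise-linear function on a bounded domain it admits a minimum-of-affines representation with one affine per linear piece.

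The error analysis decomposes into two sources. First, the inductive gap $\bar g^{t+1}-\ubar g^{t+1}$ enters the objective of~\eqref{eq:unrelaxed} additively, hence propagates into $G^+(\b_j)-G^-(\b_j)$ with only an $O(1)$ blow-up per period. Second, the interpolation error between grid points is controlled via Lipschitz continuity of $g^t$ on the bounded balance domain: since $g^t$ is concave with values in $[0,T\cdot v_{\max}]$ its one-sided slopes are uniformly bounded, so taking grid resolution $\delta=\text{poly}(\epsilon/T)$ makes the grid contribution to the gap at most $\epsilon/T$ per period. Telescoping across the $T$ periods and scaling the per-period tolerances appropriately, the total relative gap at $t=1$ is at most the target $\kappa$, as required.

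The main obstacle I foresee is controlling the \emph{number of pieces} across all $T$ periods. Each period produces one affine piece per grid point, but the LP solved at each grid point already ingests $\bar g^{t+1}$ and $\ubar g^{t+1}$ as part of the objective, so its complexity grows with the previous period's number of pieces; naively iterated this could blow up. The resolution is that the number of pieces at period $t$ is dictated only by the number of grid points used at that period, not by the internal complexity of $g^{t+1}$, because the supporting-hyperplane construction collapses the LP output at each grid point into a single affine function. Once this bookkeeping is in place, both $\bar g^t$ and $\ubar g^t$ become minima of at most $\text{poly}(1/\epsilon, T, k, v_{\max})$ affines, which is the FPTAS-friendly representation needed to feed \autoref{thm:dp}.
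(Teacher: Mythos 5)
Your overall scaffolding (backward induction, replace $g^{t+1}$ with $\ubar g^{t+1},\bar g^{t+1}$ in the LP, sample points and take an upper convex hull for the lower envelope and supporting hyperplanes for the upper envelope) matches the paper's plan, including the bookkeeping point that each period's piece count is governed only by the number of sample points at that period. However, the central technical claim you rely on to control the interpolation error is wrong, and it is precisely the claim the paper singles out as the non-trivial part.

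You assert that because $g^t$ is concave with values in $[0, T\cdot v_{\max}]$, ``its one-sided slopes are uniformly bounded,'' so a uniform grid of resolution $\mathrm{poly}(\epsilon/T)$ suffices. That does not follow: a concave function on a bounded domain with bounded range can have one-sided slopes that blow up near the boundary of the domain (think of $\sqrt{b}$ near $b=0$). Here $g^t_i(\b)$ is the shadow price of the balance constraint, and the paper explicitly notes that the partial derivatives $g_i(\b)$ can be as large as $2^{O(N)}$ in the bit-complexity $N$ of the input (e.g.\ when type probabilities are exponentially small). With slopes of that magnitude near $\b\approx 0$, a uniform grid with polynomially many points leaves regions where $\bar g^t-\ubar g^t$ exceeds any target $\kappa\cdot\max_{\b'}g^t(\b')$. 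The paper's fix is an \emph{adaptive} point-selection scheme: maintain, per coordinate, an interval bounding the unknown partial derivative together with the sub-domain where the current gap could still exceed the tolerance; each newly queried point (chosen at the boundary of that sub-domain) at least halves the derivative-uncertainty interval. This doubling argument yields exponentially small uncertainty after polynomially many queries, which is what actually delivers the polynomial piece count. You would need to replace your uniform-grid step with an argument of this type (or an equivalent bound showing the supporting-hyperplane slopes can only change sign or magnitude a polynomial number of times) for the induction to close.
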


  With such approximations to the concave function $g^t$, we can approximately
  solve the convex program \eqref{eq:unrelaxed} by solving the following linear
  program, where each occurrence of $g^t$ is replaced by $\bar g^t$:
  \begin{align}
    \max \quad &
      \bar h^{t-1}(\b^t) :=
        \E\nolimits_{\v^t}\left[\sum\nolimits_i (v^t_i \cdot x^t_i(\v^t) - \xi^t_i(v^t_{-i}))
          + \bar g^t(\Delta \b^t(\v^t) + \b^t)\right]
      \label{eq:lpbar}  \\
    \text{subj.t.} \quad &
        x^t_i(v^t_i, v^t_{-i}) - x^t_i({v^t_i}', v^t_{-i}) \leq 0,
        ~\forall i,v^t_{-i},{v^t_i}' > v^t_i \nonumber  \\
      & \textstyle
        \bar u^t_i(v^t_{-i}) := \E_{v^t_i}[{u^t_i}']
          = \E_{v^t_i}[\vartheta^t_i(\v^t)x^t_i(\v^t)]
          \leq b^t_i + \xi^t_i(v^t_{-i}),~\forall v^t_{-i} \nonumber  \\
      & \Delta b^t_i(\v^t)
          = {u^t_i}'(\v^t) - \bar u^t_i(v^t_{-i}) + \xi^t_i(v^t_{-i}),
        ~\forall i,\v^t \nonumber  \\
      & \textstyle
        \sum_{i \in [n]} x^t_i(\v^t) \leq 1,~\forall \v^t \nonumber  \\
      & x^t_i(\v^t) \geq 0,~\forall i,\v^t \nonumber  \\
      & \bar g^t(\Delta \b^t(\v^t) + \b^t)
          \leq \boldsymbol{\alpha}_l \cdot (\Delta \b^t(\v^t) + \b^t) + \beta_l,
          ~ \forall l,\v^t \label{eq:minaffine}
  \end{align}

  As we mentioned previously, in the last constraint \eqref{eq:minaffine}, we
  assume that the function $\bar g^t$ can be expressed by the minimum of a set
  of affine functions, i.e.,
  \begin{align*}
    \textstyle
    \bar g^t(\b) = \min_{l \in L} \boldsymbol{\alpha}_l \cdot \b + \beta_l.
  \end{align*}

  Here we slightly abuse the notation of $\bar g^t(\Delta \b^t(\v^t) + \b^t)$ as
  variables in the linear program. Note that including the constraints
  \eqref{eq:minaffine} in the linear program \eqref{eq:lpbar} suffices to ensure
  that the variable values always agree with the corresponding function
  values. Because on the one hand, by the constraints, each variable is no more
  than the corresponding affine functions; on the other hand, since the
  coefficients of these variables are always positive in the objective, for each
  of the variables, at least one of the constraints in \eqref{eq:minaffine} must
  be binding.

  Let $\bar h^{t-1}(\b)$ denote the optimal value of the linear program
  \eqref{eq:lpbar} when $\b^t = \b$. Similarly, we can define $\ubar h^{t-1}$ by
  simply replacing all $\bar g^t$ with $\ubar g^t$ in \eqref{eq:lpbar}. The
  following lemma shows that $\ubar h^{t-1}$ and $\bar h^{t-1}$ are in fact
  lower and upper bounds of $g^{t-1}$.
  \begin{lemma}\label{lem:pwapprox}
    $\ubar h^{t-1}$ and $\bar h^{t-1}$ are concave and
    \begin{align*}
      \forall \b,~\ubar h^{t-1}(\b) \leq g^{t-1}(\b) \leq \bar h^{t-1}(\b)
        \leq \ubar h^{t-1}(\b) + \max_{\b'} (\bar g^t(\b') - \ubar g^t(\b')).
    \end{align*}
  \end{lemma}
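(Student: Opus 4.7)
The plan is to establish the three claims in order: concavity, the sandwich bound, and the gap bound. Throughout, I will rely on the fact that $\bar g^t$ and $\ubar g^t$ are concave (as minima of affine functions, per \autoref{lem:pwlinear}) and that the linear program \eqref{eq:lpbar} is parameterized by $\b^t$ through exactly two linear channels: the right-hand side $b^t_i + \xi^t_i(v^t_{-i})$ of the utility constraint, and the shift $\b^t$ inside $\bar g^t(\Delta \b^t(\v^t) + \b^t)$.

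For concavity of $\bar h^{t-1}$, I would use the standard result that the value function of a concave maximization over a feasible set that varies linearly in a parameter is concave in that parameter. Concretely, for any $\b^{(0)}, \b^{(1)}$ and $\theta \in [0,1]$, take optimal primal variables $(\x^{(0)}, \Delta\b^{(0)})$ and $(\x^{(1)}, \Delta\b^{(1)})$ for the two programs, and form their convex combination. Linearity of the allocation, utility, balance and feasibility constraints shows the combination is feasible for the program at $\theta \b^{(0)} + (1-\theta)\b^{(1)}$; concavity of the objective (it is linear in $\x$ and $\xi$ plus a concave function of a linear expression in $\b^t$) then yields $\bar h^{t-1}(\theta \b^{(0)} + (1-\theta)\b^{(1)}) \geq \theta \bar h^{t-1}(\b^{(0)}) + (1-\theta) \bar h^{t-1}(\b^{(1)})$. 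The argument for $\ubar h^{t-1}$ is identical.

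For the sandwich $\ubar h^{t-1}(\b) \leq g^{t-1}(\b) \leq \bar h^{t-1}(\b)$, observe that the three programs defining $\ubar h^{t-1}$, $g^{t-1}$ (via \eqref{eq:unrelaxed}), and $\bar h^{t-1}$ share identical feasible regions and differ only in that the $g^t$ term in the objective is replaced by $\ubar g^t$, $g^t$, and $\bar g^t$ respectively. Since the coefficient of each $g^t(\Delta \b^t(\v^t) + \b^t)$ in the expectation is a nonnegative probability $f(\v^t)$, and since \autoref{lem:pwlinear} gives the pointwise bound $\ubar g^t(\b') \leq g^t(\b') \leq \bar g^t(\b')$ for every $\b'$, the objectives are pointwise ordered on the common feasible region, hence so are their maxima.

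For the gap bound, I would use the optimizer-swap trick. Let $(\x^\star, \Delta\b^\star)$ attain $\bar h^{t-1}(\b)$. Since the feasible region is unchanged, this solution is also feasible for the $\ubar h^{t-1}$ program, giving $\ubar h^{t-1}(\b) \geq \E_{\v^t}\bigl[\sum_i (v^t_i x^\star_i - \xi^t_i) + \ubar g^t(\Delta\b^\star(\v^t) + \b)\bigr]$. Subtracting from $\bar h^{t-1}(\b)$ leaves only the difference $\E_{\v^t}[\bar g^t(\Delta\b^\star(\v^t) + \b) - \ubar g^t(\Delta\b^\star(\v^t) + \b)]$, which is bounded above by $\max_{\b'}(\bar g^t(\b') - \ubar g^t(\b'))$. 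I expect the only mildly delicate point to be verifying that the parameter $\b$ enters linearly into both the feasibility constraints and the objective, and the optimizer-swap then delivers the gap bound directly without needing any quantitative duality or sensitivity analysis.
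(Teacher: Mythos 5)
Your proof is correct and follows essentially the same route as the paper: the sandwich and gap inequalities come from the identical optimizer-swap arguments over the shared feasible region, using the pointwise ordering $\ubar g^t \leq g^t \leq \bar g^t$. You also supply the concavity argument (perturbing $\b^t$ and convexly combining feasible solutions), which the paper's proof environment actually leaves implicit.
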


  \begin{proof}[Proof of \autoref{lem:pwapprox}]
    Let $\ubar \x^*$, $\x^*$, and $\bar \x^*$ be the corresponding optimal
    solution for $\ubar h^{t-1}(\b)$, $g^{t-1}(\b)$, and $\bar h^{t-1}(\b)$,
    respectively. Denote these three programs as $\ubar P$, $P$, and $\bar P$,
    respectively. Note that $\ubar \x^*$ is feasible in $P$, hence by the
    optimality of $\x^*$,
    \begin{align*}
      P(\ubar \x^*) \leq P(\x^*),
    \end{align*}
    where $P(\x)$ denotes the objective value of program $P$ with variables
    being $\x$.

    On the other hand, for any $\x$ that is feasible to both $\ubar P$ and $P$,
    we have,
    \begin{align*}
      \ubar P(\x) \leq P(\x).
    \end{align*}
    Because in the objectives, $\ubar g^{t-1} \leq g^{t-1}$. Hence we conclude that
    \begin{align*}
      \ubar h^{t-1}(\b) = \ubar P(\ubar \x^*)
        \leq P(\ubar \x^*) \leq P(\x^*) = g^{t-1}(\b).
    \end{align*}

    Similarly, we can prove that $g^{t-1}(\b) \leq \bar h^{t-1}(\b)$. For the
    last inequality, denote $\delta = \max_{\b'}(\bar g^t(\b')-\ubar g^t(\b'))$.
    Note that for the $\bar g$ variables in $\bar \x^*$, if we reduce all of
    them by $\delta$ to get $\bar \x^{**}$, $\bar \x^{**}$ must be feasible to
    $\ubar P$ and the objective value is reduced by at most $\delta$, hence
    \begin{align*}
      \ubar P(\x^*) \geq \ubar P(\bar \x^{**}) \geq \bar P(\bar \x^*) - \delta.
    \end{align*}

    In other words,
    \begin{align*}
      \ubar h^{t-1}(\b) \geq \bar h^{t-1}(\b) - \delta.
    \end{align*}
  \end{proof}

  Even through both $\ubar h^{t-1}$ and $\bar h^{t-1}$ are in fact piece-wise
  linear functions, but they cannot be directly used for the computation of
  period $t-1$, because they may have exponentially many pieces. However, since
  they are concave, we can apply \autoref{lem:pwlinear} to get the lower bound
  of $\ubar h^{t-1}$ and the upper bound of $\bar h^{t-1}$:
  \begin{align*}
    \ubar g^{t-1}(\b) \leq \ubar h^{t-1}(\b) \leq g^{t-1}(\b)
      \leq \bar h^{t-1}(\b) \leq \bar g^{t-1}(\b).
  \end{align*}

  Therefore, we can recursively compute $\ubar g^1, \ldots, \ubar g^T$ and
  $\bar g^1, \ldots, \bar g^T$ and hence compute the ironed virtual values of
  the approximately optimal bank account mechanism for any fixed $\xi$.
  Combining with the fact that $\xi$ could then be optimized using gradient
  descent, we are done with our algorithm.

\bibliographystyle{named}
\bibliography{charact}

\begin{thebibliography}{}

\bibitem[\protect\citeauthoryear{Ashlagi \bgroup \em et al.\egroup
  }{2016}]{ashlagi2016sequential}
Itai Ashlagi, Constantinos Daskalakis, and Nima Haghpanah.
\newblock Sequential mechanisms with ex-post participation guarantees.
\newblock In {\em Proceedings of the 2016 ACM Conference on Economics and
  Computation}, pages 213--214. ACM, 2016.

\bibitem[\protect\citeauthoryear{Athey and Segal}{2013}]{athey2013efficient}
Susan Athey and Ilya Segal.
\newblock An efficient dynamic mechanism.
\newblock {\em Econometrica}, 81(6):2463--2485, 2013.

\bibitem[\protect\citeauthoryear{Balseiro \bgroup \em et al.\egroup
  }{2016}]{balseiro2016dynamic}
Santiago Balseiro, Vahab Mirrokni, and Renato Paes~Leme.
\newblock Dynamic mechanisms with martingale utilities.
\newblock 2016.

\bibitem[\protect\citeauthoryear{Balseiro \bgroup \em et al.\egroup
  }{2017a}]{BLMPZ17}
Santiago Balseiro, Max Lin, Vahab Mirrokni, Renato Paes~Leme, and Song Zuo.
\newblock Dynamic revenue sharing.
\newblock In {\em Advances in Neural Information Processing Systems 30: Annual
  Conference on Neural Information Processing Systems 2017, 4-9 December 2017,
  Long Beach, CA, {USA}}, pages 2678--2686, 2017.

\bibitem[\protect\citeauthoryear{Balseiro \bgroup \em et al.\egroup
  }{2017b}]{BalseiroMirRokniPaesLeme2017EC}
Santiago~R Balseiro, Vahab~S Mirrokni, and Renato~Paes Leme.
\newblock Dynamic mechanisms with martingale utilities.
\newblock forthcoming, {\em Management Science}, 2017.

\bibitem[\protect\citeauthoryear{Balseiro \bgroup \em et al.\egroup
  }{2019}]{balseiro2019dynamic}
Santiago Balseiro, Vahab Mirrokni, Renato Paes~Leme, and Song Zuo.
\newblock Dynamic double auctions: Towards first best.
\newblock In {\em {SODA} 2019}. SIAM, 2019.

\bibitem[\protect\citeauthoryear{Barber \bgroup \em et al.\egroup
  }{1996}]{barber1996quickhull}
C~Bradford Barber, David~P Dobkin, and Hannu Huhdanpaa.
\newblock The quickhull algorithm for convex hulls.
\newblock {\em ACM Transactions on Mathematical Software (TOMS)},
  22(4):469--483, 1996.

\bibitem[\protect\citeauthoryear{Bergemann and
  Said}{2011}]{bergemann2011dynamic}
Dirk Bergemann and Maher Said.
\newblock Dynamic auctions.
\newblock {\em Wiley Encyclopedia of Operations Research and Management
  Science}, 2011.

\bibitem[\protect\citeauthoryear{Bergemann and
  V{\"a}lim{\"a}ki}{2002}]{bergemann2002information}
Dirk Bergemann and Juuso V{\"a}lim{\"a}ki.
\newblock Information acquisition and efficient mechanism design.
\newblock {\em Econometrica}, 70(3):1007--1033, 2002.

\bibitem[\protect\citeauthoryear{Bergemann and
  V{\"a}lim{\"a}ki}{2017}]{bergemann2017dynamic}
Dirk Bergemann and Juuso V{\"a}lim{\"a}ki.
\newblock Dynamic mechanism design: An introduction.
\newblock Technical report, Cowles Foundation for Research in Economics, Yale
  University, 2017.

\bibitem[\protect\citeauthoryear{Bergemann \bgroup \em et al.\egroup
  }{2017}]{BergemannCastroWeintraub2017EC}
Dirk Bergemann, Francisco Castro, and Gabriel Weintraub.
\newblock The scope of sequential screening with ex post participation
  constraints.
\newblock In {\em Proceedings of the 2017 ACM Conference on Economics and
  Computation}, pages 163--164. ACM, 2017.

\bibitem[\protect\citeauthoryear{Boyd and Vandenberghe}{2004}]{boyd2004convex}
Stephen Boyd and Lieven Vandenberghe.
\newblock {\em Convex optimization}.
\newblock Cambridge university press, 2004.

\bibitem[\protect\citeauthoryear{Cai and Zhao}{2017}]{cai2017simple}
Yang Cai and Mingfei Zhao.
\newblock Simple mechanisms for subadditive buyers via duality.
\newblock In {\em Proceedings of the 49th Annual ACM SIGACT Symposium on Theory
  of Computing}, STOC 2017, pages 170--183, New York, NY, USA, 2017. ACM.

\bibitem[\protect\citeauthoryear{Cai \bgroup \em et al.\egroup
  }{2016}]{cai2016duality}
Yang Cai, Nikhil~R Devanur, and S~Matthew Weinberg.
\newblock A duality-based unified approach to bayesian mechanism design.
\newblock {\em ACM SIGecom Exchanges}, 15(1):71--77, 2016.

\bibitem[\protect\citeauthoryear{Cavallo}{2008}]{cavallo2008efficiency}
Ruggiero Cavallo.
\newblock Efficiency and redistribution in dynamic mechanism design.
\newblock In {\em Proceedings of the 9th ACM conference on Electronic
  commerce}, pages 220--229. ACM, 2008.

\bibitem[\protect\citeauthoryear{Chawla \bgroup \em et al.\egroup
  }{2016}]{chawla2016simple}
Shuchi Chawla, Nikhil~R Devanur, Anna~R Karlin, and Balasubranianian Sivan.
\newblock Simple pricing schemes for consumers with evolving values.
\newblock In {\em {SODA} 2016}, pages 1476--1490. SIAM, 2016.

\bibitem[\protect\citeauthoryear{Daskalakis \bgroup \em et al.\egroup
  }{2013}]{daskalakis2013mechanism}
Constantinos Daskalakis, Alan Deckelbaum, and Christos Tzamos.
\newblock Mechanism design via optimal transport.
\newblock In {\em Proceedings of the fourteenth ACM conference on Electronic
  commerce}, pages 269--286. ACM, 2013.

\bibitem[\protect\citeauthoryear{Daskalakis \bgroup \em et al.\egroup
  }{2017}]{daskalakis2017strong}
Constantinos Daskalakis, Alan Deckelbaum, and Christos Tzamos.
\newblock Strong duality for a multiple-good monopolist.
\newblock {\em Econometrica}, 85(3):735--767, 2017.

\bibitem[\protect\citeauthoryear{Devanur \bgroup \em et al.\egroup
  }{2015}]{devanur2014perfect}
Nikhil~R Devanur, Yuval Peres, and Balasubramanian Sivan.
\newblock Perfect bayesian equilibria in repeated sales.
\newblock In {\em Proceedings of the Twenty-Sixth Annual ACM-SIAM Symposium on
  Discrete Algorithms}, pages 983--1002. SIAM, 2015.

\bibitem[\protect\citeauthoryear{Elkind}{2007}]{elkind2007designing}
Edith Elkind.
\newblock Designing and learning optimal finite support auctions.
\newblock In {\em {SODA} 2007}, pages 736--745. SIAM, 2007.

\bibitem[\protect\citeauthoryear{Hartline and
  Roughgarden}{2008}]{HartlineMoneyBurning}
Jason~D. Hartline and Tim Roughgarden.
\newblock Optimal mechanism design and money burning.
\newblock In {\em {STOC} 2008}, pages 75--84, 2008.

\bibitem[\protect\citeauthoryear{Jackson and
  Sonnenschein}{2007}]{jackson2007overcoming}
Matthew~O Jackson and Hugo~F Sonnenschein.
\newblock Overcoming incentive constraints by linking decisions.
\newblock {\em Econometrica}, 75(1):241--257, 2007.

\bibitem[\protect\citeauthoryear{Kakade \bgroup \em et al.\egroup
  }{2013}]{kakade2013optimal}
Sham~M Kakade, Ilan Lobel, and Hamid Nazerzadeh.
\newblock Optimal dynamic mechanism design and the virtual-pivot mechanism.
\newblock {\em Operations Research}, 61(4):837--854, 2013.

\bibitem[\protect\citeauthoryear{Lobel and Paes~Leme}{2017}]{lobel2017dynamic}
Ilan Lobel and Renato Paes~Leme.
\newblock Dynamic mechanism design under positive commitment.
\newblock 2017.

\bibitem[\protect\citeauthoryear{Mirrokni \bgroup \em et al.\egroup
  }{2016a}]{mirrokni2016dynamic}
Vahab Mirrokni, Renato Paes~Leme, Pingzhong Tang, and Song Zuo.
\newblock Dynamic auctions with bank accounts.
\newblock In {\em Proceedings of the International Joint Conference on
  Artificial Intelligence (IJCAI)}, pages 387--393, 2016.

\bibitem[\protect\citeauthoryear{Mirrokni \bgroup \em et al.\egroup
  }{2016b}]{mirrokni2016nonclairvoyant}
Vahab Mirrokni, Renato Paes~Leme, Pingzhong Tang, and Song Zuo.
\newblock Non-clairvoyant dynamic mechanism design.
\newblock 2016.

\bibitem[\protect\citeauthoryear{Mirrokni \bgroup \em et al.\egroup
  }{2016c}]{mirrokni2016optimal}
Vahab Mirrokni, Renato Paes~Leme, Pingzhong Tang, and Song Zuo.
\newblock Optimal dynamic mechanisms with ex-post ir via bank accounts.
\newblock {\em arXiv preprint arXiv:1605.08840}, 2016.

\bibitem[\protect\citeauthoryear{Mirrokni \bgroup \em et al.\egroup
  }{2018a}]{mirrokni2018dynamic}
Vahab Mirrokni, Renato Paes~Leme, Rita Ren, and Song Zuo.
\newblock Dynamic mechanism design in the field.
\newblock In {\em Proceedings of the 2018 World Wide Web Conference}, WWW '18,
  pages 1359--1368. International World Wide Web Conferences Steering
  Committee, 2018.

\bibitem[\protect\citeauthoryear{Mirrokni \bgroup \em et al.\egroup
  }{2018b}]{MirrokniLTZ17}
Vahab Mirrokni, Renato Paes~Leme, Pingzhong Tang, and Song Zuo.
\newblock Non-clairvoyant dynamic mechanism design.
\newblock In {\em Proceedings of the 2018 ACM Conference on Economics and
  Computation}, pages 169--169. ACM, 2018.

\bibitem[\protect\citeauthoryear{Myerson}{1981}]{myerson1981optimal}
Roger~B Myerson.
\newblock Optimal auction design.
\newblock {\em Mathematics of operations research}, 6(1):58--73, 1981.

\bibitem[\protect\citeauthoryear{Nisan \bgroup \em et al.\egroup
  }{2007}]{nisan2007algorithmic}
Noam Nisan, Tim Roughgarden, Eva Tardos, and Vijay~V Vazirani.
\newblock {\em Algorithmic game theory}, volume~1.
\newblock Cambridge University Press Cambridge, 2007.

\bibitem[\protect\citeauthoryear{Pai and Vohra}{2013}]{pai2013optimal}
Mallesh~M Pai and Rakesh Vohra.
\newblock Optimal dynamic auctions and simple index rules.
\newblock {\em Mathematics of Operations Research}, 38(4):682--697, 2013.

\bibitem[\protect\citeauthoryear{Papadimitriou \bgroup \em et al.\egroup
  }{2016}]{papadimitriou2016complexity}
Christos Papadimitriou, George Pierrakos, Christos-Alexandros Psomas, and Aviad
  Rubinstein.
\newblock On the complexity of dynamic mechanism design.
\newblock In {\em {SODA} 2016}, pages 1458--1475. SIAM, 2016.

\bibitem[\protect\citeauthoryear{Parkes and Singh}{2004}]{parkes2004mdp}
David~C Parkes and Satinder Singh.
\newblock An mdp-based approach to online mechanism design.
\newblock 2004.

\bibitem[\protect\citeauthoryear{Pavan \bgroup \em et al.\egroup
  }{2014}]{pavan2014dynamic}
Alessandro Pavan, Ilya Segal, and Juuso Toikka.
\newblock Dynamic mechanism design: A myersonian approach.
\newblock {\em Econometrica}, 82(2):601--653, 2014.

\bibitem[\protect\citeauthoryear{Rochet}{1985}]{rochet1985taxation}
Jean-Charles Rochet.
\newblock The taxation principle and multi-time hamilton-jacobi equations.
\newblock {\em Journal of Mathematical Economics}, 14(2):113--128, 1985.

\bibitem[\protect\citeauthoryear{Roughgarden and
  Schrijvers}{2016}]{roughgarden2016ironing}
Tim Roughgarden and Okke Schrijvers.
\newblock Ironing in the dark.
\newblock In {\em Proceedings of the 2016 ACM Conference on Economics and
  Computation}, pages 1--18. ACM, 2016.

\bibitem[\protect\citeauthoryear{Shen \bgroup \em et al.\egroup
  }{2018}]{shen2018expost}
Weiran Shen, Zihe Wang, and Song Zuo.
\newblock Ex-post {IR} dynamic auctions with cost-per-action payments.
\newblock In {\em Proceedings of the Twenty-Seventh International Joint
  Conference on Artificial Intelligence, {IJCAI} 2018, July 13-19, 2018,
  Stockholm, Sweden.}, pages 505--511, 2018.

\end{thebibliography}
\clearpage

\begin{appendix}
  \section{Missing Proofs}

\subsection{Proof of \autoref{lem:incg} and \autoref{lem:cong}}\label{ssec:prfg}

\begin{proof}[Proof of \autoref{lem:incg} and \autoref{lem:cong}]
  We prove these lemmas by induction from $t = T$ down to $1$. Since $g^T \equiv
  0$, all these properties are satisfied for $t = T$.

  Suppose that $g^{t + 1}(\b)$ is weakly increasing and concave. We first show
  that $g^t(\b)$ is weakly increasing. Consider $\b' \geq \b$. By definition,
  $g^t(\b)$ is the optimal value of program \eqref{eq:original} for period $t$
  (with given $\xi$ and $\b$). Note that:
  \begin{enumerate}
    \item The optimal solution for program \eqref{eq:original} with balance
          $\b$ is also feasible for the program with balance $\b'$;
    \item The objective is weakly increasing in $\b$ even with the same
          variables. The first part of the objective, $\E_\v[\sum_i(v_i \cdot
          x_i(\v) - \xi_i(v_{-i}))]$, is the same for the same allocation rule
          and the second part of the objective, $\E[g^{t+1}(\Delta \b(\v) +
          \b)]$, is weakly increasing in $\b$ by induction (with the same
          $\Delta \b(\v)$).
  \end{enumerate}
  Therefore, $g^t(\b') \geq g^t(\b)$ is weakly increasing.

  We then show that $g^t(\b)$ is concave, i.e., for any $\theta \in [0, 1]$,
  \begin{align*}
    g^t(\theta \b + (1 - \theta) \b')
      \geq \theta g^t(\b) + (1 - \theta) g^t(\b').
  \end{align*}
  Let $\x^*$ and $\Delta \b^*$ be the optimal solution for balance vector $\b$
  and $\x^{**}$ and $\Delta \b^{**}$ be the optimal solution for balance
  vector $\b'$. Then their convex combination, $(\theta \x^* + (1 - \theta)
  x^{**}, \theta \Delta \b^* + (1 - \theta) \Delta \b^{**})$, is a feasible
  solution for balance vector $\theta \b + (1 - \theta) \b'$. Meanwhile, the
  objective function is concave in the variables,
  \begin{align*}
    & g^t(\theta \b + (1 - \theta) \b')  \\
    \geq~&
    \E_v\left[\sum_i(v_i \cdot (\theta x^*_i(\v) + (1 - \theta) x^{**}_i(\v))
      - \xi_i(v_{-i})) + g^{t+1}(\theta(\Delta \b^*(\v) + \b)
                                 + (1 - \theta)(\Delta \b^{**}(\v) + \b'))
        \right]  \\
    \geq~& \theta \E_v\left[\sum_i(v_i \cdot x^*_i(\v) - \xi_i(v_{-i}))
                            + g^{t+1}(\Delta \b^*(\v) + \b) \right]  \\
         & + (1 - \theta)
            \E_v\left[\sum_i(v_i \cdot x^{**}_i(\v) - \xi_i(v_{-i}))
                      + g^{t+1}(\Delta \b^{**}(\v) + \b') \right]  \\
    =~& \theta g^t(\b) + (1 - \theta) g^t(\b'),
  \end{align*}
  where the first inequality is by the definition of $g^t(\b)$ and the
  feasibility of the convex combination we just argued and the second
  inequality is by the induction of the concavity of $g^{t+1}(\b)$.
\end{proof}

\subsection{Proof of \autoref{lem:virtual}}\label{ssec:virtual}

  \begin{proof}[Proof of \autoref{lem:virtual}]
    By KKT conditions, $\partial L / \partial x_i(\v) \geq 0$:\footnote{If not
    differentiable, then any sub-gradient must be non-negative.}
    \begin{align*}
      \frac{\partial L}{\partial x_i(\v)} & = \left(
        v_i - \frac{\lambda_i(v_{-i}) \vartheta_i(\v)}{f(v_{-i})}\right) f(\v)
        + \frac{\partial \E_{\v}[g(\Delta \b(\v) + \b)]}{\partial x_i(\v)}
        - \mu(\v)  \\
      & = \left(\left(1 + g_i(\Delta \b(\v) + \b)\right)v_i
                - \left(\frac{\lambda_i(v_{-i})}{f(v_{-i})}
                + \E_{v_i}[g_i(\Delta \b(\v) + \b)]\right)
                \vartheta_i(\v)\right) f(\v) - \mu(\v)  \\
      & = (\alpha_i(\v) v_i - \beta_i(v_{-i}) \vartheta_i(\v)) f(\v) - \mu(\v),
    \end{align*}
    where
    \begin{align*}
      \frac{\partial \E_{\v}[g(\Delta \b(\v) + \b)]}{\partial x_i(\v)}
      & = \frac{\partial \E_{\v}[g(\Delta \b(\v) + \b)]}{\partial u'_i(\v)}
          \cdot \frac{\partial u'_i(\v)}{\partial x_i(\v)}
        + \frac{\partial \E_{\v}[g(\Delta \b(\v) + \b)]}{\partial \bar u'_i(v_{-i})}
          \cdot \frac{\partial \bar u'_i(v_{-i})}{\partial x_i(\v)}  \\
      & = g_i(\Delta \b(\v) + \b)f(\v) v_i
          - \E_{v_i}[g_i(\Delta \b(\v) + \b)]f(v_{-i}) \cdot f(v_i)\vartheta_i(\v)
    \end{align*}
    and if $x_i(\v) > 0$, $\partial L / \partial x_i(\v)$ must be $0$, i.e.,
    \begin{align*}
      & \forall i,~\v,~ (\alpha_i(\v) v_i - \beta_i(v_{-i}) \vartheta_i(\v)) f(\v)
                          - \mu(\v) \leq 0  \\
      & x_i(\v) > 0 \Infer
        (\alpha_i(\v) v_i - \beta_i(v_{-i}) \vartheta_i(\v)) f(\v) - \mu(\v) = 0
    \end{align*}
    Then for any $x_i(\v) > 0$, we have
    \begin{align*}
      \alpha_i(\v) v_i - \beta_i(v_{-i}) \vartheta_i(\v) = \mu(\v) / f(\v)
        \geq \alpha_j(\v) v_j - \beta_j(v_{-j}) \vartheta_j(\v).
    \end{align*}

    In other words, the optimal solution is a virtual value maximizer, where the
    virtual value $\phi_i(\v)$ is given as
    \begin{align*}
      \phi_i(\v) = \alpha_i(\v) v_i - \beta_i(v_{-i}) \vartheta_i(\v).
    \end{align*}

    In particular, $\lambda_i(v_{-i}) = 0$ when
    \begin{align*}
      \E_{v_i}[\vartheta_i(\v)x_i(\v)] - b_i - \xi_i(v_{-i}) < 0,
    \end{align*}
    i.e., the balance of buyer $i$ is not spent out.
  \end{proof}

\subsection{Proof of \autoref{lem:pwlinear}}\label{ssec:prfpwlinear}

  \begin{proof}[Proof of \autoref{lem:pwlinear}]
    The main idea of the proof is to (i) select polynomially many different
    points $\b$ in the $k$-dimension space $\R^k_+$, (ii) compute $g(\b)$ for
    the selected points, and (iii) use the upper surface of the convex hull of
    $\{(\b, g(\b))\}$ as $\ubar h(\b)$.

    Note that step (ii) and (iii) can be done very efficiently.\footnote{For
    (ii), the computation is efficient by hypothesis. For (iii), the convex hull
    in $(k + 1)$-dimension can be using the quickhull algorithm
    \citep{barber1996quickhull}. Note that $k$ is a constant here.} However, the
    step (i) is non-trivial.

    To illustrate the algorithm, we first highlight that both the maximum value
    of each $b_i$, $g(\b)$, and the partial derivatives $g_i(\b)$ are bounded by
    $2^{O(N)}$, where $N$ is the input size. In other words, simply selects
    points with fixed distance between other nearby points will result in
    exponentially many points. Therefore, the distances should be determined
    adaptively.

    The key observation here is to keep track of an upper bound on the
    uncertainty of the partial derivative along each direction and the domain of
    $b_i$ where the difference of $g(\b)$ and $\ubar h(\b)$ could be more than
    $\kappa \cdot \max_{\b'} g(\b')$. Then each time pick a point on the
    boundary of the domain and the upper of the uncertainty of the partial
    derivative will be reduced by at least half. Hence after polynomially many
    steps, the uncertainty of the partial derivative will be exponentially small
    and we can easily select modest numbers of points to get a good enough
    approximation $\ubar h(\b)$ to the target function $g(\b)$.
  \end{proof}

\section{Ironing}\label{sec:ironing}

  We add the monotonicity constraint \eqref{eq:cic} back and show that this
  constraint corresponds to applying ironing operation on the virtual value
  functions.

  Consider the following program without relaxation, where the monotonicity
  constraint is rewritten in an explicit way.
  \begin{align}
    \max \quad &
      \Rev(\b, \xi) =
        \E_\v\left[\sum\nolimits_i (v_i \cdot x_i(\v) - \xi_i(v_{-i}))
                                 + g(\Delta \b(\v) + \b)\right]
               & \text{Lagrange multipliers}
      \label{eq:unrelaxed}  \\
    \text{subj.t.} \quad &
        x_i(v_i, v_{-i}) - x_i(v'_i, v_{-i}) \leq 0,
        ~\forall i,v_{-i},v'_i > v_i
               & \eta_i(v_i, v'_i; v_{-i}) \nonumber  \\
      & \textstyle
        \bar u'_i(v_{-i}) \leq b_i + \xi_i(v_{-i}),~\forall i, v_{-i}
               & \lambda_i(v_{-i}) \nonumber  \\
      & \Delta b_i(\v) = u'_i(\v) - \bar u'_i(v_{-i}) + \xi_i(v_{-i}),
        ~\forall i,\v
               & \varnothing  \nonumber  \\
      & \textstyle
        \sum_{i \in [n]} x_i(\v) \leq 1,~\forall \v
               & \mu(\v) \nonumber  \\
      & x_i(\v) \geq 0,~\forall i,\v \nonumber
  \end{align}

  Then the Lagrange for this program is as follows.
  \begin{align}
    &~L(\x, \boldeta, \boldlambda, \boldmu) \nonumber  \\
    = & \E_\v\left[\sum_i (v_i \cdot x_i(\v) - \xi_i(v_{-i}))
                    + g(\Delta \b(\v) + \b)\right] \nonumber  \\
      & - \sum_i \sum_\v x_i(\v)
          \left(\sum_{v'_i > v_i}\eta_i(v_i, v'_i; v_{-i})
                - \sum_{v''_i < v_i}\eta_i(v''_i, v_i; v_{-i})\right)
          \nonumber  \\
      & - \sum_{i, v_{-i}} \lambda_i(v_{-i})
          \left(\E_{v_i}[\vartheta_i(\v)x_i(\v)] - b_i - \xi_i(v_{-i})\right)
        - \sum_{\v} \mu(\v) \left(\sum_i x_i(\v) - 1\right).
        \label{eq:lagrange} \tag{\text{L}}
  \end{align}
  By similar argument with the proof of \autoref{lem:virtual}, any optimal
  auction must maximize the ironed virtual welfare and the {\em ironed virtual
  value function} $\tilde \phi_i(\v)$ is given as
  \begin{align*}
    \tilde \phi_i(\v) = \alpha_i(\v) v_i - \beta_i(v_{-i}) \vartheta_i(\v)
      - \frac1{f(\v)}\left(\sum_{v'_i > v_i}\eta_i(v_i, v'_i; v_{-i})
            - \sum_{v''_i < v_i}\eta_i(v''_i, v_i; v_{-i})\right).
  \end{align*}

  Note that conditional on any $v_{-i}$, the expectation of the virtual value
  with ironing equals to the expectation of the virtual value without ironing:
  \begin{align*}
    \sum_{v_i} f_i(v_i) \phi_i(\v) = \sum_{v_i} f_i(v_i) \tilde \phi_i(\v).
  \end{align*}
  Thus $\eta_i(v_i, v'_i; v_{-i}) / f(v_{-i})$ defines a mass move of
  $f_i(v_i)\phi_i(\v)$ within $\R_+$. In particular, since
  $\eta_i(v_i, v'_i; v_{-i}) \geq 0$, the move is from small $v_i$ to large
  $v'_i$ ($v_i < v'_i$).

  Meanwhile, by complementary slackness,
  \begin{align*}
    (x_i(v_i, v_{-i}) - x_i(v'_i, v_{-i}))\eta_i(v_i, v'_i; v_{-i}) = 0,
  \end{align*}
  which means the mass move must happen within the regions with the same
  allocation.

  Also note that the virtual value $\phi_i(\v)$ depend on $v_{-i}$ as well, the
  ironing does not make $\tilde \phi_i(\v)$ an increasing function. Instead, the
  following function must be weakly increasing,
  \begin{align*}
    \sign\left[\tilde \phi_i(\v)
               - \max\big\{0, \max_{j \neq i} \tilde \phi_j(\v)\big\}\right].
  \end{align*}
  Because the allocation maximizes the virtual welfare and must be weakly
  increasing.

\end{appendix}
\end{document}